\newtheorem{theorem}{Theorem}[section]
\newtheorem{conjecture}[theorem]{Conjecture}
\newtheorem{proposition}[theorem]{Proposition}
\newtheorem{remark}[theorem]{Remark}
\newenvironment{proof}[1][Proof]{\noindent\textbf{#1.} }{\ \rule{0.5em}{0.5em}}
\begin{document}

\begin{center}
\textbf{Guiding the guiders:\\[0pt]
Foundations of a market-driven theory of disclosure}\\[0pt]
\bigskip

M. Gietzmann, A. J. Ostaszewski, M. H. G. Schr\"{o}der

\bigskip

\bigskip
\end{center}

\noindent \textbf{Abstract.} A foundational approach is developed for a
mathematical theory of managerial disclosure in relation to asset pricing;
this involves both the earnings guidance disclosed by firm management and
market `trackers' pricing the firm's exposure to quotable risks.

\noindent \textbf{Keywords. }Risk-neutral valuation, asset-price dynamics,
earnings guidance, optimal censoring, materiality, state observer system.

\noindent \textbf{MSC.} Primary 91G50; Secondary: 91G80; 93E11; 93E35;
60G35; 60G25.

\bigskip

\section{Introduction}

Asset-pricing models make assumptions about how information arrives and is
disclosed to its investors (henceforth the market). For assets arising out
of a productive activity by a firm, management reports based on internal
audits of the various \textit{accounting numbers} (accounting variables),
prepared in time for the scheduled (publicly pre-specified) dates, are one
such source. If, by the nature of their activity, management make more
frequent audits (for instance in directing replenishment to a specific
level, which enforces frequent stock-taking, as in the retail business),
then opportunities arise for `early' (unscheduled) disclosure. How should
this additional information be used to signal the firm's superior value,
i.e. to upgrade its share-price? When (or how) should the market `price in'
the absence of early disclosures by a firm to include the possibility that
no news is bad news. The answer must rely on an equilibrium between the
market's ability to down-grade the share-price and the firm's ability to
take advantage of ignorance: hiding some bad news within the uncertain cause
of absent news, i.e. \textit{censoring} the information.

The accounting literature has usually approached this question by including
a \textit{specified} (i.e. known in advance), single, `additional' interim
reporting date, ahead of the next scheduled disclosure, and allowing for
absence of an early disclosure by randomizing the possibility that
management has held an additional audit -- see \cite{Dye}, \cite{JunK}.
However, with the early date a datum, this approach places limits on
comparative analysis; for an equally spaced multi-period model see \cite%
{EinZ}.

\bigskip

\subsection{Earnings guidance: strategic considerations}

In this the companion paper to \cite{GieO} we propose an alternative general
approach, by instilling more realism into the stylized Black-Scholes model
of \cite{GieO}. There the market determination of the (share-)price of the
firms in a sector reflects only the discretional information strategically
released by a firm (i.e. with anticipation of its price effect), usually in
the form of \textit{earnings guidance}, as below, at a stochastic time-point
(i.e. at \textit{unknown} dates in advance of a subsequent mandatory date of
information disclosure). Despite this highly specific origin for the arrival
of information in the market, that model holds considerable advantages,
thanks to its continuous-time approach which overcomes the limitations of
the traditional `two-period analysis' just mentioned. There \textit{%
unspecified} moments in time offer an early disclosure. By way of an example
from \cite{GieO}, which goes beyond the scope intended here, there is a
derivable `band-wagon' effect whereby the introduction of multiple sources
of information reduces an individual firm's optimal frequency of disclosures
by reference to time left to the next mandatory disclosure date.

Typically, however, the market responds also to other public sources of
information, such as trading in the shares of the firm, and by assessing the
exposure of the firm to such economic risks as may be priced by
market-quoted options.

Here we create a more general framework to include such other, already
existing, market-based information enabling the market to make proper use of
this additional information about the firm. This prompts a deeper analysis,
equivalently a formalization at a foundational level, of the various
mechanisms at work, offered in the Complements Section. For simplicity, we
consider here only the situation where the market's concern is for a single
firm, rather than a whole sector. In this we are guided by the clarifying
simplifications that occur in the case of an isolated (`single') firm in the
stylized model \cite{GieO}.

There the firm itself comes to know (`observes') its own state $V_{t}$ only
at discrete, stochastically generated times $t$, not known to the market;
the manager, occupied by a variety of tasks, cannot receive observations
except when these breach agreed thresholds, as reported by personnel
delegated to collect this information, perhaps continuously. This feature of
a hidden observation scheme enables the firm to bury (hide) `bad' news and
only report `sufficiently good' news, principally because the manager cannot
at any time credibly claim as absent an observation at that time. That model
prices the firm in periods of silence, i.e. when the firm fails to supply an
early report of information. Key to this is identifying at each time $t$ a
value $L_{t}$ such that if $L_{t}$ happened to coincide with the true and
currently observed state $V_{t}$, the firm would be \textit{indifferent,} as
regards its market valuation\textbf{\ }(i.e. given the market's information {%
\textit{\ ex ante}}), between choosing to disclose or to withhold the
current observed state $V_{t}$. Such an indifference level $L_{t},$
determined by the information available from before time $t,$ is typically
unique. Censoring, i.e. suppression of an observation below this unique $%
L_{t},$ draws from the market a valuation of the firm at $L_{t}$. In fact, $%
L_{t}$ is the largest possible valuation of the firm, consistent with the
information available from before time $t$. As such it is termed the \textit{%
optimal censor} of time $t.$ Note that observations above $L_{t}$ that are
disclosed cause an upward jump in the firm's valuation. We should emphasize
that \textit{only truthful disclosures} are allowed in the model.

The mathematical argument is based on risk-neutral valuation, which must
incorporate the potential future re-evaluations of firm-value consequent
upon future early disclosures.

The existence of an indifference pricing process is directly attributable to
the firm knowing the market's filtration $\boldsymbol{F}^{\ast }=\{\mathcal{F%
}_{t}^{\ast }\}_{t}$ and the mechanics of how the market performs
computations based on past disclosures (in particular, the probabilities at
each instant which the market attaches to the firm suppressing an
observation of its state). Since the firm's filtration $\boldsymbol{F}=\{%
\mathcal{F}_{t}\}_{t}$ is an enlargement of $\boldsymbol{F}^{\ast }$ \cite%
{Jeu}, in that the firm feeds the market with information by choosing when
to supply its private observations, one may say that the firm \textit{%
emulates} (can simulate) the market. In turn the market's calculations are
based on the firm's algorithm, though not on the firm's up-to-date
observations. The indifference price arises from characterizing a notional
\textit{parametric equilibrium} between the two agents: the firm and the
market (we do not differentiate between informed and noisy traders),
selecting parameters in the computation they use to \textit{second-guess}
each other.

The paper identifies the mechanisms underlying some fairly general valuation
procedures, allowing the market to form its beliefs from additional
information and the firm to exploit the market beliefs by disclosing value
superior to that belief, but nevertheless to give a fair view of future
disclosures. As these mechanisms are inspired by the principal argument and
findings of \cite{GieO}, we close here with a summary of that argument (in
the simplified notation used below). Suppose the next mandatory disclosure
is at the \textit{terminal} time $1$ and that, under the \textit{market's}
risk-neutral measure $Q,$ at time $t<1$ the probability of a disclosure to
the market occurring at time $T\in (t,1]$ is $\tau _{T}.$ (In \cite{GieO}
opportunities to observe the state of the firm are generated by a Poisson
clock.) Based on its information at time $T$, the earnings guidance
announced at that time by the firm gives its \textit{target }terminal\textit{%
\ }value as $VT=E^{Q}[V_{1}|\mathcal{F}_{T}];$ the target and the two
filtrations above are related to the indifference level $L_{T}$ of time $T$
by the two equations%
\[
VT=\tau _{T}E^{Q}[V_{1}|V_{T}\geq L_{T},\mathcal{F}_{T}]+(1-\tau _{T})L_{T},%
\leqno{(1.1.1a)}
\]%
\[
L_{T}=E^{Q}[V_{1}|ND_{T}(L_{T}),\mathcal{F}_{T}^{\ast }].\leqno{(1.1.1b)}
\]%
Here $ND_{T}(L)$ is the event that no disclosure occurs at time $T,$ which
means that either there has been no opportunity to observe $V_{T}$ or else
the manager has observed $V_{T}$ but $V_{T}\leq L.$ From here, in the
context of \cite{GieO}, given how $\mathcal{F}_{T}^{\ast }$ is generated
from $\mathcal{F}_{T}$ \textbf{via} $\mathcal{F}_{T}^{\ast }$-measurable
decisions, one deduces in the limit as $T\rightarrow t$ from (1.1.1a) that $%
t\mapsto L_{t}$ satisfies a simple ordinary differential equation (involving
the instantaneous variance of $Q$ and the Poisson clock's intensity).
Assuming multiplicative \textit{scalability,} that $V_{T+u}=V_{T}\tilde{V}%
_{u}$ with independence of $V_{T}$ and $\tilde{V}_{u}$, equations (1.1.1a,b)
can be further simplified.

In summary, apart from simplifications, this paper's contributions include:
announcements of both sufficiently good and sufficiently bad news (dual,
`materiality' aspects in the release of private information), incorporation
of current public information in modelling market sentiment, and comparative
statics of early disclosures.

The paper is organized as follows. Section 2 contains a preliminary
discussion of our modelling aims. Section 3 models the market's beliefs as
to firm value, based on expected performance indices and is supported by a
geometric Brownian (GMB)\ implementation. This is followed in Section 4 by a
model of the firm setting its target values; using a benchmark scheme to be
followed by the firm in observing its own state, this is shown at its
simplest to be similar to determining option exercise values, and is
supported by an indicative GMB\ implementation. This enables us to perform
comparative statics in Section 5, which we conclude meets a primary
objective: to show how parameter values determine early or delayed voluntary
release of information in equilibrium. We comment briefly on the
implications of our approach in Section 6, thus rounding off the paper's
contribution, and close in Section 7 with Complements indicating a framework
for generalizations and potential variations. An Appendix gives details of
well-known GMB\ formulas needed in the paper.

\section{Model preliminaries}

In this section we introduce a model of how a firm $F$ decides to disclose
information intermittently at times $t$ to the market $M$ about its state $%
V_{t},$ voluntarily between legally mandated (mandatory) disclosure dates.
This involves modelling how the market forms beliefs $V_{t}^{\ast }$ in
periods of silence about the true current value $V_{t}.$ (We regard the
market as dual to the firm, hence the `star' notation here and below.)

Our first two tasks are: to model the beliefs of $M$ (in \S 3) and then to
model $F$'s choice of `equilibrium' \textit{indifference level} (in \S 4),
below which an observation of $V_{t},$ if any, is not disclosed (as in the
Introduction). We will rely on tractable Black-Scholes frameworks, and in
the second task we will be guided by the findings of \cite{GieO}. In \S 5 we
prove the existence of these indifference levels in a benchmark context. We
may then pass to calculations which will yield conclusions, in particular,
about the likelihood of early disclosure. This enables us in \S 6 to address
comparative statics of voluntary (i.e. early) disclosures, matters beyond
the reach of \cite{GieO}.

Thereafter, we discuss generalizations identifying potential for more
sophisticated models (e.g. inclusion of analyst forecasts).

The firm has `private' access, according to some\textit{\ observation scheme}
-- possibly also intermittent -- to its `state' $V_{t}$ (thought of as the
income stream). This is modelled as a random time $\tau $ not known to $M.$
The firm applies fixed decision rules by which it determines at time $t$
whether to withhold any observation it may have, or to disclose its
information voluntarily (and truthfully) to the market via two items of
information: (i) the current value $V_{t},$ and (ii) the expected state at
the terminal date, i.e. the next mandatory disclosure date. We term the
former the declared\textit{\ current value} $VC$ and the latter its declared
\textit{target} value $VT$ (current as at the date of its disclosure). The
firm's intention is to achieve the highest possible market valuation at each
point in time; here this is implemented by use of a fixed \textit{decision
rule}, based both on its own private information about its state and on the
market's public belief about the state of the firm, which in turn depends on
the market's information base. We term this, publicly held belief about the
value, the \textit{market sentiment}. $F$ forms its expectations by
reference to a measure $Q_{VT}$ (labelled by the last declared target) under
which its (discounted) observation process is a martingale.

To model market sentiment, we borrow and amend a concept from control
theory, that of a \textit{state observer} (a.k.a. `state estimator') system;
for a discussion see \S 7.1(iv).

In the \cite{GieO} model the observation scheme was a Poisson clock with
known jump intensity; the market sentiment was an equilibrium valuation
obtained from the latest disclosed state-value, prudently discounted
downwards (i.e. by incorporating the possible undisclosed poor performance);
discounting is by a rate determined by the (known) Poisson jump intensity.

Below, the market sentiment $V_{t}^{\ast }$ is based on the latest disclosed
information and on the current value, or perhaps on the prevailing
behaviour, of some specified portfolio of traded assets with value $%
S_{t}^{\ast }$ (e.g. current value, average value, record value to date).
The portfolio, termed the \textit{tracker}, is viewed by the market as
capturing the firm's exposure to quotable (market-priced) risks. The key
property of $S_{t}^{\ast }$ is that it is priced by a risk-neutral (i.e.
market) measure. That is, $M$ forms its expectations by reference to a
measure $Q^{\ast }$ under which the (discounted) tracker process is a
martingale. We note that, at each time $t,$ since $F$ has access to $M$'s
information $\mathcal{F}_{t}^{\ast }$ plus its own observation, i.e. its own
filtration $\{\mathcal{F}_{t}\}_{t}$ is an enlargement of $\{\mathcal{F}%
_{t}^{\ast }\}_{t}$,
\[
Q^{\ast }|\,\mathcal{F}_{t}^{\ast }=Q_{VT}|\,\mathcal{F}_{t}^{\ast }.
\]

As in \cite{GieO}, so too here, the link between market sentiment $V^{\ast }$
and asset $S^{\ast }$ needs to be determined by equilibrium considerations:
if at time $t$ the firm applies a decision rule $h$ to the observations of $%
S^{\ast }$ and $V$, it will determine an \textit{indifference level} $L$ (as
in the Introduction) which, if $F$ observed that $V_{t}=L,$ would make $F$
indifferent between disclosure or otherwise of $V_{t}$. While the firm
determines its disclosure using a rule $h$ (below) that exploits any
superiority of the observed value over market sentiment, rather than its
expected terminal value, it complements such a disclosure by supplying the
market with information about the expected terminal value.

We take the decision rules for $F\ $and consequently also for $M$ (with
starred notation for the latter's rules) in the (time-independent) form
\[
h_{\varepsilon ,a}(t,x,y)=(x-(1\!+\!a)y)\varepsilon ,\qquad (x,y\in \mathbb{R%
},t>0)\qquad \leqno{(2.0.1)}
\]%
where $a>-1$ is termed a \textit{mark-up}, and $\varepsilon =\pm 1$ its
\textit{signature}, positive when deciding a \textit{good-news }disclosure
event at time $u,$ say, when
\[
V_{u}\geq (1\!+\!a)V_{u}^{\ast }\,,\leqno{\rm (2.0.1a)}
\]%
and respectively negative for a disclosure of \textit{bad-news}. (A dynamic
variant is considered briefly in \S 7.1) These may be viewed as backed by a
theoretical justification for such a `principal-agent' delegation (here a
shareholder-mandated policy): see the classic paper {\cite[Prop. 1.4]{BaiD}}
for a rigorous derivation of control limits, using an `evaluation and
control' method. The argument there refers to the costs versus the benefits
of extracting information and the authors claim support of (perhaps,
anecdotal) hard evidence that such rules are observed in practice.

We close by stressing that the various asset-price dynamics here are
modelled only between consecutive disclosures -- in `periods of silence'.

\section{Market sentiment: shadowing the firm}

Our first task begins at a point in time $T_{0}$ with $0\leq T_{0}\leq 1$
when the firm is assumed to have disclosed two items of data: current state $%
VC$ and declared target $VT.$ We use $VT$ as a suffix conveniently labelling
the various processes started at time $T_{0}$ .

The next time of disclosure, following access to an observation of $X_{T}$
at time $T=T_{+},$ will occur provided
\[
h(T,V_{T},V_{T}^{\ast })\geq 0.
\]%
Here the firm $F$ applies its \textit{decision rule }$h$ taken in general to
depend on the time-$T$ values of $V$ and $V^{\ast },$ and perhaps on $T$
itself, a possibility ruled out below to simplify calculations (hence the
rules in \S 2 above). Then at time $T=T_{+},$ the firm will declare its
current state $V_{T}$ and set a new target value $VT=VT_{+}.$ So the main
tasks are to devise a justifiable model for a process $V^{\ast },$ which we
view as `shadowing the firm', and for $VT_{+}$ (in the next section).

\subsection{Two modelling assumptions}

We begin by identifying \textit{how} to model $V_{t}^{\ast }.$ This will be
determined by two components. Although the entire process is driven by a
specified market portfolio $S^{\ast }$, the \textit{tracker}, nevertheless,
at each non-disclosure time-point $t$ in $(T_{0},T_{1})$ a correction term
needs to be included to price in the effects of any possible future
disclosure event, say of time $\tau $. We thus aim for a two-component model
\[
V_{t}^{\ast }=S_{t}^{\ast }+\Delta V_{t}^{\ast }\,,\leqno{(3.1.1)}
\]%
which requires that we model $\Delta V_{t}^{\ast }$. The latter must refer
to contingent claims in regard to whether or not $F$ has attained its target
to date; with $h^{\ast }$ a market decision rule, this requires pricing
contingent claims on $[t,\tau ]$ characterized by the instantaneous time-$%
\tau $ pay-off
\[
S_{\tau }^{\ast }\mathbf{1}_{\{h^{\ast }(\tau ,S_{\tau }^{\ast },VT)\geq
0\}},\quad \text{for}\quad \tau =\tau _{VT}\,.\leqno{(3.1.2)}
\]%
So this is a `securitization' similar to standard derivative instruments. It
depends on a `random time' $\tau =\tau _{VT}$ (with values in $(t,T_{1}]$)
unknown to the agent $M$. Unfortunately, pricing these depends on \textit{%
individual} investor attitudes, so on micro level information, typically
unobservable. The pragmatic approach is to replace the pricing of these
claims with approximations. Assuming a constant risk-free rate $r$ in force,
we now make explicit a \textit{first modelling assumption}, that with $\tau
=\tau _{VT}$
\[
\Delta V_{t}^{\ast }:=\exp (-r(T_{1}\!-\!t))E^{Q^{\ast }}[S_{\tau }^{\ast }\,%
\mathbf{1}_{\{h^{\ast }(\tau ,S_{\tau }^{\ast },VT)\geq 0\}}\,|\,\mathcal{F}%
_{t}^{\ast }\,]-S_{t}^{\ast };\leqno{(3.1.3)}
\]%
here $Q^{\ast }$ is an assumed risk-neutral measure conditional on market
information -- conditional at time $t$ on market information $\mathcal{F}%
_{t}^{\ast }$ (with the expectation on the right assumed meaningful). The
formula identifies $\Delta V_{t}^{\ast }$ as the excess over $S_{t}^{\ast }$
of the fair value of the effect of a disclosure occurring at time $\tau .$
The final step at time $t$ is to pass to an approximation, which make
explicit a \textit{second modelling assumption}, that
\[
S_{\tau }^{\ast }\,\mathbf{1}_{\{h^{\ast }(\tau ,S_{\tau }^{\ast },VT)\geq
0\}}\approx E_{t}^{\ast }\,\mathbf{1}_{\{h^{\ast }(t,\Sigma _{t,T_{1}}^{\ast
},VT)\geq 0\}}.\leqno{(3.1.4)}
\]%
Here, on the one hand, $\Sigma _{t,T_{1}}^{\ast }$ is some chosen \textit{%
tracker-performance index} over the entire remaining time interval, for
instance%
\[
\left.
\begin{array}{c}
\Sigma _{t,T_{1}}^{\ast }=\max \{S_{u}^{\ast }|u\in \lbrack t,T_{1}]\},\text{
or} \\
\Sigma _{t,T_{1}}^{\ast }=1/(T_{1}-t)\int_{[t,T_{1}]}S_{u}^{\ast }\text{ }%
\mathrm{d}u,\text{ or} \\
\Sigma _{t,T_{1}}^{\ast }=\min \{S_{u}^{\ast }|u\in \lbrack t,T_{1}\};%
\end{array}%
\right\} \leqno{(3.1.5abc)}
\]%
and, on the other hand, a typical choice for $E_{t}^{\ast }$ is obtained by
solving $h^{\ast }(t,E_{t}^{\ast },VT)=0$. This last has a unique solution,
the result of the simple form of decision rule in \S 2 above. Thus in
good-news situations $E_{t}^{\ast }$ identifies a value \textit{%
at-least-as-good-as \/}the true value $S_{\tau }^{\ast }$ at disclosure; in
bad-news situations such an $E_{t}^{\ast }$ will be \textit{%
at-most-as-bad-as\/ }that. We summarize these modelling considerations in:

\bigskip

\begin{proposition}
\quad \textit{The two modelling assumptions }(3.1.3), (3.1.4)\textit{\ imply
that
\[
V_{t}^{\ast }=(S_{t}^{\ast }+\Delta E_{t}^{\ast })\,Q^{\ast }(h^{\ast
}(t,\Sigma _{t,T_{1}}^{\ast },VT)\geq 0\,|\,\mathcal{F}_{t}^{\ast })\exp
(-(T_{1}\!-\!t)r)\,,
\]%
where:\newline
}\noindent (i)\textit{\ $S^{\ast }$ is assumed to start with the value $VC$
at time $T_{0}$,\newline
}\noindent (ii) \textit{$\Delta E_{t}^{\ast }=E_{t}^{\ast }\!-\!S_{t}^{\ast
} $, and\newline
}\noindent (iii)\textit{\ $\Sigma _{t,T_{1}}^{\ast }$ is chosen as in
example }(3.1.5)\textit{\ above.}
\end{proposition}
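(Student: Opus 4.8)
The plan is to read the statement as a substitution-and-rearrangement result: the two displayed modelling assumptions (3.1.3) and (3.1.4) are the only inputs, so the proof reduces to inserting (3.1.3) into the two-component decomposition (3.1.1), applying the approximation (3.1.4) inside a conditional expectation, and then recognising what remains as a discounted product of $E_t^{\ast}$ with a conditional probability.

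Concretely I would proceed as follows. First, substitute the definition (3.1.3) of $\Delta V_t^{\ast}$ into (3.1.1); the two occurrences of $S_t^{\ast}$ cancel, leaving $V_t^{\ast} = \exp(-r(T_1-t))\,E^{Q^{\ast}}[\,S_\tau^{\ast}\mathbf{1}_{\{h^{\ast}(\tau,S_\tau^{\ast},VT)\geq 0\}}\,|\,\mathcal{F}_t^{\ast}\,]$ with $\tau=\tau_{VT}$. Second, invoke (3.1.4): taking $Q^{\ast}(\cdot\,|\,\mathcal{F}_t^{\ast})$-expectations of both sides of that approximate identity replaces the random-time integrand by its time-$t$ surrogate, giving $V_t^{\ast} = \exp(-r(T_1-t))\,E^{Q^{\ast}}[\,E_t^{\ast}\,\mathbf{1}_{\{h^{\ast}(t,\Sigma_{t,T_1}^{\ast},VT)\geq 0\}}\,|\,\mathcal{F}_t^{\ast}\,]$. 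Third, note that $E_t^{\ast}$, the unique root of $h^{\ast}(t,E_t^{\ast},VT)=0$ (uniqueness from the affine-in-$x$ form (2.0.1) of the rule), depends only on $t$ and on the already-disclosed target $VT$, hence is $\mathcal{F}_t^{\ast}$-measurable and comes outside the expectation. Fourth, identify $E^{Q^{\ast}}[\,\mathbf{1}_{\{h^{\ast}(t,\Sigma_{t,T_1}^{\ast},VT)\geq 0\}}\,|\,\mathcal{F}_t^{\ast}\,]$ with $Q^{\ast}(h^{\ast}(t,\Sigma_{t,T_1}^{\ast},VT)\geq 0\,|\,\mathcal{F}_t^{\ast})$. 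Finally, write $E_t^{\ast}=S_t^{\ast}+\Delta E_t^{\ast}$ by clause (ii) to obtain the asserted formula; clause (i) merely records the initial condition $S_{T_0}^{\ast}=VC$ fixing the process $S^{\ast}$, and clause (iii) records that the index is one of (3.1.5abc).

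The only step requiring genuine care is the measurability bookkeeping in the third and fourth steps: one must confirm that nothing surviving the approximation (3.1.4) is trapped inside the expectation except the indicator. This is where the specific choices (3.1.5abc) matter, since each is an extremum or average of $S^{\ast}$ over the \emph{forward} interval $[t,T_1]$ and so is genuinely not $\mathcal{F}_t^{\ast}$-measurable --- which is exactly why the indicator cannot be pulled out and instead produces the nontrivial probability factor --- whereas $E_t^{\ast}$ is $\mathcal{F}_t^{\ast}$-measurable straight from its defining equation. I would also flag that passing from the approximate identity (3.1.4) to an exact equality for $V_t^{\ast}$ is to be understood in the spirit of the term ``modelling assumption'': (3.1.4) is adopted as an equality defining the working model, so no error estimate is intended or needed.
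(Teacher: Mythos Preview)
Your proposal is correct and follows exactly the substitution-and-rearrangement route the paper intends: the proposition is presented there as a direct summary of the modelling considerations (3.1.1), (3.1.3), (3.1.4), and your five steps make that chain explicit, including the key observation that $E_t^{\ast}=(1+a^{\ast})VT$ is $\mathcal{F}_t^{\ast}$-measurable so can be pulled outside the conditional expectation. Nothing is missing.
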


\bigskip

\subsection{A geometric Brownian implementation}

On the firm side we take the firm's observation process $V$ to be modelled
by a geometric Brownian motion $X$:
\[
X_{s+t}=X_{t}\exp ((\mu _{VT}-\hbox{$ {1\over 2}$}\sigma _{VT}^{2})s+\sigma
_{VT}W_{VT,s}),\quad s\in \lbrack 0,T\!-\!t];\leqno{(3.2.1)}
\]%
here $W_{VT}$ is a Brownian motion independent of time-$t$ information $%
\mathcal{F}_{t}$, with $\sigma _{VT}>0$ and $\mu _{VT}\in \mathbb{R}$ fixed.

Likewise, on the market side we adopt a Black-Scholes model with one risky
security $S^{\ast },$ so modelled again by a geometric Brownian motion,
which thus satisfies the requirements of Proposition~3.1 and carries several
advantages beside: firstly, model-completeness (see for example {\cite[%
Prop.~8.2.1, p.~302]{MusR}}  for the completeness of the multi-dimensional
model) and, secondly, the ability of being re-started with value\textbf{\ }$%
VC$ at time $T_{0}.$ We take the dynamics in the form
\[
S_{t+u}^{\ast }=S_{t}^{\ast }\exp (\mu ^{\ast }u+\sigma ^{\ast }W_{u}^{\ast
}),\quad u\in \lbrack 0,\infty )\,\leqno{(3.2.2)}
\]%
with Brownian motion $W^{\ast }$ independent of time-$t$ information $%
\mathcal{F}_{t}^{\ast }$, and two parameters: volatility $\sigma ^{\ast }>0$
and drift $\mu ^{\ast }=r-\delta -(1/2){\sigma ^{\ast }}^{2}$, for some $%
\delta \in \mathbb{R}$.

This modelling choice ensures that the probabilities in the formula of
Proposition~3.1 are well-defined (see below), emerging as tail probabilities
for good-news decisions, since
\[
h^{\ast }(t,\Sigma ^{\ast },VT)\geq 0\quad \hbox{iff}\quad h_{+1,a^{\ast
}}(\Sigma ^{\ast },VT)\geq 0\quad \hbox{iff}\quad \Sigma ^{\ast }\geq
(1\!+\!a^{\ast })VT,
\]%
for $\Sigma ^{\ast }$ any random variable. It is routine to derive explicit
formulae for these, see the next subsections. We focus here on the choices
of (3.1.5a,c) of max- and min-tracker-performance index $\Sigma
_{t,T_{1}}^{\ast }(S^{\ast })$, leaving aside the average value index.

\subsubsection{\textit{\ V}$^{\ast }$ from \textit{the running-max
approximation}}

We first deal with the running-max
\[
\Sigma ^{\ast }:=\max \{S_{u}^{\ast }\,|\,u\in \lbrack t,T_{1}]\}.
\]%
We note a reversion from bad-news to good-news\textbf{\ }decisions via:
\[
Q^{\ast }(h_{+1,a^{\ast }}(\Sigma ^{\ast },VT)\geq 0)+Q^{\ast
}(h_{-1,a^{\ast }}(\Sigma ^{\ast },VT)>0)=1.
\]%
The actual influence of the second term on the first is determined by the
relative size of $(1\!+\!a^{\ast })VT$ and $S_{t}^{\ast }$, so according to
the sign of
\[
A^{\ast }=\log ((1\!+\!a^{\ast })VT/S_{t}^{\ast }).\leqno{(3.2.3)}
\]%
Below $\mathrm{Erfc}\,$is the complementary error function, for which see
subsection A2 of the Appendix. Equations (A.7a) and (A.7b) in the Appendix
yield the following results.

\bigskip

\begin{proposition}
\quad \textit{If }$A^{\ast }<0$\textit{, equivalently, }$S_{t}^{\ast
}>(1\!+\!a^{\ast })VT$\textit{, then}%
\[
Q^{\ast }(h_{-1,a^{\ast }}(\Sigma ^{\ast },VT)>0)=0,
\]%
\textit{\ and there is no `bad-news' influence; thus}
\[
Q^{\ast }(h_{+1,a^{\ast }}(\Sigma ^{\ast },VT)\geq 0)=1\,,\quad \text{if }%
S_{t}^{\ast }\geq (1+a^{\ast })VT.\leqno{\rm (3.2.4a)}
\]%
\textit{If }$A^{\ast }\geq 0$\textit{, equivalently }$S_{t}^{\ast }\leq
(1\!+\!a^{\ast })VT$\textit{, then `bad-news' carries influence measured by}%
\[
Q^{\ast }(h_{+1,a^{\ast }}(\Sigma ^{\ast },VT)\geq 0)=1-Q^{\ast
}(h_{-1,a^{\ast }}(\Sigma ^{\ast },VT)>0),\leqno{\rm(3.2.4b)}
\]%
\textit{where}%
\[
Q^{\ast }(h_{-1,a^{\ast }}(\Sigma ^{\ast },VT)\geq 0)=\leqno{\rm(3.2.4c)}
\]%
\[
=\frac{1}{2}\mathrm{Erfc}\left( \frac{A_{t}^{\ast }-(T-t)\mu ^{\ast }}{%
\sigma ^{\ast }\sqrt{2(T-t)}}\right) +\frac{1}{2}\exp \left( \frac{2\mu
^{\ast }A_{t}^{\ast }}{(\sigma ^{\ast })^{2}}\right) \mathrm{Erfc}\left(
\frac{A_{t}^{\ast }+(T-t)\mu ^{\ast }}{\sigma ^{\ast }\sqrt{2(T-t)}}\right) .
\]
\end{proposition}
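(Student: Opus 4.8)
The plan is to reduce the statement to standard first-passage computations for a Brownian motion with drift, exactly of the type recorded in Appendix A2 (equations (A.7a), (A.7b)). Recall from (3.2.2) that $S_{t+u}^{\ast}=S_{t}^{\ast}\exp(\mu^{\ast}u+\sigma^{\ast}W_{u}^{\ast})$, so that on the interval $[t,T]$ (taking $T=T_{1}$, writing $\theta:=T-t$) the running maximum satisfies
\[
\log(\Sigma^{\ast}/S_{t}^{\ast})=\max_{0\le u\le\theta}(\mu^{\ast}u+\sigma^{\ast}W_{u}^{\ast})=:M_{\theta}.
\]
By definition (3.2.3), $A_{t}^{\ast}=\log((1+a^{\ast})VT/S_{t}^{\ast})$, and the ``bad-news'' event $h_{-1,a^{\ast}}(\Sigma^{\ast},VT)>0$ is, by the good-news/bad-news dichotomy in \S3.2.1, the event $\Sigma^{\ast}<(1+a^{\ast})VT$, i.e. $\{M_{\theta}<A_{t}^{\ast}\}$; likewise $h_{-1,a^{\ast}}(\Sigma^{\ast},VT)\ge 0$ corresponds to $\{M_{\theta}\le A_{t}^{\ast}\}$, and these coincide in probability since $M_{\theta}$ has no atoms for $\theta>0$.

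First I would dispose of the case $A_{t}^{\ast}<0$. Here the target barrier lies strictly below the starting level, so $M_{\theta}\ge M_{0}=0>A_{t}^{\ast}$ with probability one; hence $Q^{\ast}(M_{\theta}<A_{t}^{\ast})=0$, which is (3.2.4a) after taking complements (and the boundary equality $S_{t}^{\ast}=(1+a^{\ast})VT$ is handled by the same atom-free observation). Next, for $A_{t}^{\ast}\ge 0$, equation (3.2.4b) is simply the complement identity together with the reversion relation $Q^{\ast}(h_{+1}\ge 0)+Q^{\ast}(h_{-1}>0)=1$ already stated in \S3.2.1; so everything comes down to evaluating $Q^{\ast}(M_{\theta}\le A_{t}^{\ast})$ and showing it equals the right-hand side of (3.2.4c). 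For this I would invoke the classical reflection-principle formula for the law of the maximum of Brownian motion with drift: for a barrier level $b\ge 0$,
\[
P\!\left(\max_{0\le u\le\theta}(\nu u+\sigma W_{u})\le b\right)=\Phi\!\left(\frac{b-\nu\theta}{\sigma\sqrt{\theta}}\right)-e^{2\nu b/\sigma^{2}}\,\Phi\!\left(\frac{-b-\nu\theta}{\sigma\sqrt{\theta}}\right),
\]
with $\nu=\mu^{\ast}$, $\sigma=\sigma^{\ast}$, $b=A_{t}^{\ast}$. Rewriting $\Phi(-z)=\tfrac12\mathrm{Erfc}(z/\sqrt2)$ and $\Phi(z)=1-\tfrac12\mathrm{Erfc}(z/\sqrt2)$, and using that the two $\mathrm{Erfc}$ arguments differ only in the sign of $\nu\theta$, one matches the displayed combination of $\tfrac12\mathrm{Erfc}$ terms in (3.2.4c) — this is the routine algebra that the text defers to (A.7a,b). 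I would present this conversion compactly rather than in full.

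The only genuine point requiring care — and the step I expect to be the main obstacle — is the bookkeeping of which event ($h_{-1}\ge 0$ versus $h_{-1}>0$, and correspondingly the strict versus non-strict barrier crossing) appears where, and the verification that passing between them costs nothing because $\{M_{\theta}=b\}$ is null for every $\theta>0$ (so the ``bad-news'' probability in (3.2.4b) and the quantity computed in (3.2.4c) agree, and the complement with the good-news probability is exact). Once that is settled, both asserted formulas follow: (3.2.4a) from the trivial domination $M_{\theta}\ge 0$, and (3.2.4b)–(3.2.4c) from the complement identity and the reflection formula translated into $\mathrm{Erfc}$ form via the Appendix. I would therefore structure the proof as: (1) translate to $M_{\theta}$ and identify the events; (2) dispatch $A_{t}^{\ast}<0$; (3) for $A_{t}^{\ast}\ge0$, quote (A.7a,b) to get (3.2.4c) and combine with the reversion identity of \S3.2.1 to get (3.2.4b), noting the atom-free passage between strict and non-strict inequalities throughout. $\ \rule{0.5em}{0.5em}$
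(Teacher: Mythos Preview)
Your proposal is correct and follows essentially the same route as the paper: the paper simply states that ``Equations (A.7a) and (A.7b) in the Appendix yield the following results,'' and the Appendix~A.2 derivation is precisely what you outline --- translate the event on $\Sigma^{\ast}$ into a barrier event for the running maximum $M_{\theta}$ of $\mu^{\ast}u+\sigma^{\ast}W_{u}^{\ast}$, dispatch the trivial case $A_{t}^{\ast}<0$ via $M_{\theta}\geq 0$, and for $A_{t}^{\ast}\geq 0$ invoke the reflection formula for drifted Brownian motion (quoted there from [MusR, (A.85)]) and convert $N$ to $\mathrm{Erfc}$. Your explicit attention to the atom-free passage between strict and non-strict inequalities is a point the paper leaves implicit, but otherwise the arguments coincide.
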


\bigskip

To complete the picture, note that the bad news scenario when $\Sigma ^{\ast
}$ is the running-maximum can be read back from:
\[
Q^{\ast }(h_{-1,a^{\ast }}(\Sigma ^{\ast },VT)\geq 0)=0\,,\quad \quad \text{%
if }S_{t}^{\ast }>(1+a^{\ast })VT,\leqno{\rm (3.2.5)}
\]%
and that for $A^{\ast }>0$ this probability is given by equation (3.2.4c)
above.

\subsubsection{\textit{\ V}$^{\ast }$ from the \textit{running-min
approximation}}

We now deal with running-min
\[
\Sigma ^{\ast }:=\min \{S_{u}^{\ast }\,|\,u\in \lbrack t,T_{1}]\}.
\]%
The good-news and bad-news formulas hold good and, viewed technically, may
be derived by replacing $\mu ^{\ast }$ by $-\mu ^{\ast }$ and $A^{\ast }$ by
$-A^{\ast }=\log (S_{t}^{\ast }/((1\!+\!a^{\ast })VT))$; and switching to
probabilities complementary to those in (3.2.1): see the discussion for
equation (A8) in Section~A2 of the Appendix. From there, we have explicitly:

\bigskip

\begin{proposition}
\quad \textit{If }$S_{t}^{\ast }\leq (1\!+\!a^{\ast })VT,$\textit{\ then}%
\[
Q^{\ast }(h_{+1,a^{\ast }}(\min \{S_{u}^{\ast }\,|\,u\in \lbrack
t,T_{1}]\},VT)\geq 0)=0,\leqno{\rm (3.2.6a)}
\]%
\[
Q^{\ast }(h_{-1,a^{\ast }}(\min \{S_{u}^{\ast }\,|\,u\in \lbrack
t,T_{1}]\},VT)\geq 0)=1.\leqno{\rm (3.2.6b)}
\]%
\textit{If }$S_{t}^{\ast }>(1\!+\!a^{\ast })VT,$\textit{\ then}
\[
\leqno{\rm (3.2.6c)}\qquad Q^{\ast }(h_{+1,a^{\ast }}(\min \{S_{u}^{\ast
}\,|\,u\in \lbrack t,T_{1}]\},VT)\geq 0)
\]%
\[
=1-Q^{\ast }(h_{-1,a^{\ast }}(\min \{S_{u}^{\ast }\,|\,u\in \lbrack
t,T_{1}]\},VT)>0),
\]%
where
\[
\leqno{\rm(3.2.6d)}\qquad Q^{\ast }(h_{-1,a^{\ast }}(\min \{S_{u}^{\ast
}\,|\,u\in \lbrack t,T_{1}]\},VT)\geq 0)=
\]%
\[
=\frac{1}{2}\mathrm{Erfc}\left( \frac{A_{t}^{\ast }-(T-t)\mu ^{\ast }}{%
\sigma ^{\ast }\sqrt{2(T-t)}}\right) -\frac{1}{2}\exp \left( \frac{2\mu
^{\ast }A_{t}^{\ast }}{(\sigma ^{\ast })^{2}}\right) \mathrm{Erfc}\left( -%
\frac{A_{t}^{\ast }+(T-t)\mu ^{\ast }}{\sigma ^{\ast }\sqrt{2(T-t)}}\right) .
\]
\end{proposition}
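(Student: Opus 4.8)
The plan is to deduce Proposition~3.4 from the running-maximum results of Proposition~3.3 (equivalently, from the Appendix first-passage formulae (A.7b), (A.8)) by the elementary reflection that converts a running minimum into a running maximum. Fix the conditioning $\sigma$-field $\mathcal{F}_{t}^{\ast}$, set $\theta:=T_1-t$, and write the log-increment $Y_v:=\log(S_{t+v}^\ast/S_t^\ast)$, which by (3.2.2) is a Brownian motion with drift $\mu^\ast$ and volatility $\sigma^\ast$ started at $0$. Then $\min\{S_u^\ast:u\in[t,T_1]\}=S_t^\ast\exp(\min_{0\le v\le\theta}Y_v)$, and since $\min_v Y_v=-\max_v(-Y_v)$ while $-Y$ is again a Brownian motion, now with drift $-\mu^\ast$ and the same volatility $\sigma^\ast$, every event $\{h_{\pm1,a^\ast}(\min\{S_u^\ast\},VT)\ge 0\}$ (and its strict version) is a level-crossing event for the running maximum of a drifted Brownian motion at the barrier $\log(S_t^\ast/((1+a^\ast)VT))=-A^\ast$, with $\mu^\ast$ replaced by $-\mu^\ast$. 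This is exactly the substitution $\mu^\ast\mapsto-\mu^\ast$, $A^\ast\mapsto-A^\ast$ flagged in the text preceding the statement.

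First I would treat the degenerate regime $S_t^\ast\le(1+a^\ast)VT$, that is, $A^\ast\ge0$. Since $\min\{S_u^\ast:u\in[t,T_1]\}\le S_t^\ast\le(1+a^\ast)VT$ surely, the bad-news event $\{\min\{S_u^\ast\}\le(1+a^\ast)VT\}$ has probability $1$, which is (3.2.6b); and the good-news event $\{\min\{S_u^\ast\}\ge(1+a^\ast)VT\}$ forces $A^\ast=0$ together with $\min_v Y_v\ge 0$, which a Brownian motion started at $0$ achieves with probability $0$, giving (3.2.6a).

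Next, in the main regime $S_t^\ast>(1+a^\ast)VT$ (so $A^\ast<0$ and the barrier $-A^\ast>0$ sits strictly above the start of $-Y$), equation (3.2.6c) is merely the partition of the sample space into $\{h_{+1,a^\ast}(\min\{S_u^\ast\},VT)\ge 0\}=\{\min\{S_u^\ast\}\ge(1+a^\ast)VT\}$ and its complement $\{h_{-1,a^\ast}(\min\{S_u^\ast\},VT)>0\}=\{\min\{S_u^\ast\}<(1+a^\ast)VT\}$; the weak-versus-strict discrepancy is the event $\{\min\{S_u^\ast\}=(1+a^\ast)VT\}$, which is null because a Brownian path started strictly below a barrier hits it and immediately overshoots with probability one. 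For the closed form (3.2.6d) I would substitute the barrier $-A^\ast$ and the drift $-\mu^\ast$ into the drifted-Brownian first-passage law already underlying (3.2.4c) (Appendix (A.7b)); this yields two $\mathrm{Erfc}$-terms in which, after applying the identity $\mathrm{Erfc}(-x)=2-\mathrm{Erfc}(x)$ to the term whose argument changed sign under $A^\ast\mapsto-A^\ast$ and regrouping, one reads off the right-hand side displayed in (3.2.6d), and, via the complementation (3.2.6c), the remaining probability as well.

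The main obstacle — indeed the only step that is not mechanical — is the sign and complementation bookkeeping across the reflection. One must verify that reflecting the Brownian driver sends $\mu^\ast\mapsto-\mu^\ast$ while leaving $\sigma^\ast$ and the horizon $\theta$ fixed; that the ``exponential'' term of the first-passage formula acquires the factor $\exp(2(-\mu^\ast)(-A^\ast)/(\sigma^\ast)^2)=\exp(2\mu^\ast A^\ast/(\sigma^\ast)^2)$ — the very combination appearing in (3.2.4c) and (3.2.6d) — rather than its reciprocal; and that each passage between $\ge$ and $>$ costs only one of the null events identified above. Granting these routine checks, the remainder is precisely the substitution packaged in Section~A2 of the Appendix around (A.8).
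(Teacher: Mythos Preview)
Your proposal is correct and follows essentially the same route as the paper: the paper derives the running-minimum formulae by the reflection $W^{\ast}\mapsto -W^{\ast}$ (its equation (A.6)), which effects exactly the substitution $\mu^{\ast}\mapsto -\mu^{\ast}$, $A^{\ast}\mapsto -A^{\ast}$ you describe, then passes to complementary probabilities and simplifies via the identity $\mathrm{Erfc}(\xi)+\mathrm{Erfc}(-\xi)=2$, as packaged around (A.8) in the Appendix. Your treatment of the degenerate regime and of the null boundary event is slightly more explicit than the paper's, but the argument is the same.
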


\section{Setting new targets}

The previous section determined how the firm triggers disclosure by
reference to a fixed decision rule $h$ and a model of market sentiment $%
V_{t}^{\ast }$ (our proxy for an observer-system of control theory). This
replaces and simplifies the dynamics of the equilibrium approach of \cite%
{GieO}, but comes at the cost of losing information about the expected
terminal firm-value (value at the next mandatory disclosure date). The model
of \cite{GieO} identifies that expected `terminal value' as equal to the
disclosed value. The modelling in the current section provides the missing
information in the form of a new \textit{target value} $VT_{+}$ (so `plugs'
the gap between the models). The framework here is, nevertheless, inspired
by the equilibrium argument in \cite{GieO}, as summarized by the concept of
an indifference level (see equations (1.1.1a,b) in the Introduction).

If the firm were to use a threshold $L$ to trigger disclosure at some, for
the moment arbitrary, future time moment $u$ in $(t,T]$, the firm's adopted
decision rule, $h$ say, determines disclosure iff $h_{VT,u}(u,X_{u},L)\geq
0. $ As only truthful disclosures are assumed, this entails a market
valuation at the disclosed level. However, absence of a disclosure entails,
for some appropriately selected threshold $L,$ as in the model of \cite{GieO}%
, a valuation of $L.$ In summary, if $L=L(u)=L_{u}$ is selected
appropriately for time $u$, then the time $u$ valuation of the firm is given
by the random variable
\[
Z_{VT,u}(L_{u})=X_{u}\,\mathbf{1}_{\{h_{VT,u}(u,X_{u},L(u))\geq 0\}}+L\,%
\mathbf{1}_{\{h_{VT,u}(u,X_{u},L(u))<0\}}.\leqno{ (4.0.1a)}
\]

Now let $\tau _{VT}$ be a random time, with the interpretation that the
event $\tau _{VT}(u)=u$ for $u>0$ means that $F\ $observes $X_{u},$ the
complementary event being $\tau _{VT}(u)=0.$

We now modify the random variables in (4.0.1a) by taking into account the
times of observation and non-observation and define%
\[
Z_{VT,u}^{\tau _{VT}}(L_{u})=Z_{VT,u}(L)\,\mathbf{1}_{\{\tau
_{VT}(u)=u\}}+L\,\mathbf{1}_{\{\tau _{VT}(u)=0\}}.\leqno{ (4.0.1b)}
\]%
Then the expected valuation is
\[
\int_{(t,T]}E^{Q_{VT}}[Z_{VT,u}^{\tau _{VT}}(L_{u})\,|\,\mathcal{F}%
_{t}\,]\,\tau _{VT}(\mathrm{d}u),
\]%
denoting here the distribution of $\tau _{VT}$ by $\tau _{VT}$ again, for
notational convenience. As in \S 1.1 (cf. \cite{GieO}), since $Q_{VT}$ is
risk-neutral, this should agree with $X_{t}.$ Without loss of generality to
the analysis of the interval $(t,T],$ we may agree to resize (rescale) the
observation process at time $t$ to unity. Interpreting the values as
discounted to present time $t$, our modelling assumption is to seek a
constant $L=L_{VT}$ solving%
\[
1=\int_{[t,T]}E^{Q_{VT}}[Z_{VT,u}^{\tau _{VT}}(L_{VT})\,|\,\mathcal{F}%
_{t}\,]\,\tau _{VT}(\mathrm{d}u).\leqno{(4.0.2)}
\]%
In setting the new target level, this formula relies not on the market
filtration $\mathbf{F}^{\ast }$ (so \textit{not} on future market
sentiment), but on fair value computed from the larger filtration $\mathbf{F}
$ with which the firm is equipped.

Granted the existence of a solution to (4.0.2), a matter addressed in \S 4.2
below, we take $VT_{+}:=L_{u}$ to correspond to $h_{VT,u}.$

\subsection{A bench-mark observation scheme}

For a tractable implementation of the modelling assumption in formula
(4.0.2), we replace the random observation scheme $\tau $ by a deterministic
one, known only to the firm but most certainly not known to the market. This
permits a decomposition%
\[
\lbrack t,T]=\mathcal{C}_{VT}\cup \mathcal{D}_{VT}\cup \mathcal{N}_{VT}%
\leqno{(4.1.1)}
\]%
according as observation extends over continuous intervals, or either at a
finite number of (discrete) time moments, or not at all.

Then the equation above reduces to
\[
1=\int_{\mathcal{C}_{VT}}E^{Q_{VT}}[Z_{VT,u}^{\tau _{VT}}(L_{u})\,|\,%
\mathcal{F}_{t}\,]\,\frac{\mathrm{d}u}{T-t}
\]%
\[
\qquad \qquad +\sum\nolimits_{u\in \mathcal{D}_{VT}}E^{Q_{VT}}[Z_{VT,u}^{%
\tau _{VT}}(L_{u})\,|\,\mathcal{F}_{t}\,]\,q_{VT}\text{ }\leqno{(4.1.2)}
\]%
\begin{equation}
+\left( 1-\frac{\mathrm{vol}(\mathcal{C}_{VT})}{T-t}\right) L_{VT}  \nonumber
\end{equation}%
with $q_{VT}=1/\#\mathcal{D}_{VT}$, effectively the constant probability of
discrete monitoring.

Furthermore, taking $h_{VT,u}$ to be $h_{\varepsilon ,a_{VT,u}}$ (with a
mark-up $a_{VT,u}>-1$) leads to a decomposition of the variable $Z_{VT,u}$
into an option part and a non-option part, appropriately corresponding to
good-news and bad-news events. For the good-news case ($\varepsilon =+1)$,
it can readily be checked that this is
\[
Z_{VT,u}(L)=L+\max \{X_{u}-(1+a_{VT,u})L,0\}+a_{VT,u}L\mathbf{1}%
_{\{X_{u}\geq (1+a_{VT,u})L\}},\leqno{\rm(4.1.3a)}
\]%
and similarly for bad-news ($\varepsilon =-1):$%
\[
Z_{VT,u}(L)=L-\max \{(1+a_{VT,u})L-X_{u},0\}+a_{VT,u}L\mathbf{1}%
_{\{X_{u}\leq (1+a_{VT,u})L\}}.\leqno{\rm(4.1.3b)}
\]%
So the `optionality' in $Z_{VT,u}(L)$ reduces to a \textit{plain vanilla
option} corrected by a \textit{digital option}. Turning to the
practicalities of options, one way to handle positions in digital options is
to approximate them by plain vanilla positions using a selection of slightly
amended strikes. From this perspective, the optionality of $Z_{VT,u}(L)$ can
be regarded as approximately induced by a plain vanilla call- (respectively
put-)\ option with strikes close to $(1\!+\!a_{VT,u})L$. In view of its
broader role we refer to $L_{VT}$ as the \textit{optimal censor} (cf. \S %
1.1).

\bigskip

\begin{proposition}
\textbf{\ (Optimal censor optionality):}\quad \textit{When }$a_{VT,u}=0$%
\textit{\ for all }$u,$\textit{\ with the additional assumption of only
discrete observations} ($\mathcal{C}_{VT}=\emptyset $),\textit{\ the
equation (4.1.2) for the optimal censoring thresholds specializes for the
good-news event to}%
\[
1=2L_{VT}+\,q_{VT}\sum\nolimits_{u\in \mathcal{D}_{VT}}E^{Q_{VT}}[\max
\{X_{u}-L_{VT},0\}\,|\,\mathcal{F}_{t}\,],\leqno{\rm(4.1.4a)}
\]%
\textit{and for bad-news}%
\[
1=2L_{VT}-\,q_{VT}\sum\nolimits_{u\in \mathcal{D}_{VT}}E^{Q_{VT}}[\max
\{L_{VT}-X_{u},0\}\,|\,\mathcal{F}_{t}\,].\leqno{\rm(4.1.4b)}
\]
\end{proposition}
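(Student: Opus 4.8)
The plan is to take equation (4.1.2) — which the proposition explicitly asks us to specialize — as the starting point and simply feed in the two standing hypotheses. Imposing $\mathcal{C}_{VT}=\emptyset$ forces $\mathrm{vol}(\mathcal{C}_{VT})=0$, so the continuous-interval integral in (4.1.2) disappears and its last term collapses to $L_{VT}$; what remains is
\[
1=q_{VT}\sum\nolimits_{u\in\mathcal{D}_{VT}}E^{Q_{VT}}[Z_{VT,u}^{\tau_{VT}}(L_{VT})\,|\,\mathcal{F}_{t}\,]+L_{VT},
\]
where $L_{u}\equiv L_{VT}$ is the constant solution sought in (4.0.2). Everything then comes down to evaluating the conditional expectation of the summand when $a_{VT,u}=0$.

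The next step is to simplify $Z_{VT,u}^{\tau_{VT}}(L_{VT})$. In the good-news case, formula (4.1.3a) with $a_{VT,u}=0$ kills the digital-option correction $a_{VT,u}L\,\mathbf{1}_{\{X_{u}\geq(1+a_{VT,u})L\}}$ and leaves $Z_{VT,u}(L)=L+\max\{X_{u}-L,0\}$, i.e. the censor level plus a plain-vanilla call pay-off struck at $L$. Substituting this into the definition (4.0.1b) and using that the two observation events are complementary, $\mathbf{1}_{\{\tau_{VT}(u)=u\}}+\mathbf{1}_{\{\tau_{VT}(u)=0\}}=1$, the constant $L$ pulls out of both pieces, giving $Z_{VT,u}^{\tau_{VT}}(L)=L+\max\{X_{u}-L,0\}\,\mathbf{1}_{\{\tau_{VT}(u)=u\}}$. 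Since here the observation scheme is the deterministic benchmark of \S 4.1, at each scheduled time $u\in\mathcal{D}_{VT}$ the firm does observe, so $\mathbf{1}_{\{\tau_{VT}(u)=u\}}\equiv 1$ there (equivalently, the scheme is degenerate given $\mathcal{F}_{t}$ under $Q_{VT}$), whence $E^{Q_{VT}}[Z_{VT,u}^{\tau_{VT}}(L_{VT})\,|\,\mathcal{F}_{t}]=L_{VT}+E^{Q_{VT}}[\max\{X_{u}-L_{VT},0\}\,|\,\mathcal{F}_{t}]$.

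Finally I would substitute back and collect the $L_{VT}$ terms: the reduced form of (4.1.2) becomes $1=q_{VT}\sum_{u\in\mathcal{D}_{VT}}L_{VT}+q_{VT}\sum_{u\in\mathcal{D}_{VT}}E^{Q_{VT}}[\max\{X_{u}-L_{VT},0\}\,|\,\mathcal{F}_{t}]+L_{VT}$; since $q_{VT}=1/\#\mathcal{D}_{VT}$ the first sum equals exactly $L_{VT}$, which combines with the trailing $L_{VT}$ to produce $2L_{VT}$ and hence (4.1.4a). The bad-news identity (4.1.4b) follows verbatim from the same steps using (4.1.3b) in place of (4.1.3a): with $a_{VT,u}=0$ that formula reads $Z_{VT,u}(L)=L-\max\{L-X_{u},0\}$ (a short put struck at $L$), which is what produces the minus sign in front of the sum.

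The computation is routine; the one point deserving care is the bookkeeping of the censor level $L_{VT}$, which enters (4.1.2) in two distinct roles — as the constant ``foot'' of every contingent pay-off $Z_{VT,u}(L_{VT})$, and separately as the residual value attached to the non-observation times (the term $(1-\mathrm{vol}(\mathcal{C}_{VT})/(T-t))L_{VT}$). It is precisely the identity $q_{VT}\cdot\#\mathcal{D}_{VT}=1$ that folds the $\#\mathcal{D}_{VT}$ copies of the first into a single extra $L_{VT}$, giving the coefficient $2$. A lesser point to spell out is why $Z_{VT,u}^{\tau_{VT}}$ may be replaced by $Z_{VT,u}$ on $\mathcal{D}_{VT}$: this is exactly the effect of passing to the deterministic benchmark scheme, under which observation at the scheduled times is certain, so that the selection indicator is identically $1$ there.
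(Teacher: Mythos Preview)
Your proof is correct and is exactly the routine specialization the paper intends: the proposition is stated without a separate proof, as an immediate consequence of substituting $\mathcal{C}_{VT}=\emptyset$ and $a_{VT,u}=0$ into (4.1.2) via the decompositions (4.1.3a,b), and your bookkeeping of the two sources of $L_{VT}$ (the ``foot'' of each $Z_{VT,u}$ and the residual non-monitoring term) together with $q_{VT}\cdot\#\mathcal{D}_{VT}=1$ is precisely what is needed.
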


\bigskip

\subsection{Existence of $L_{VT}$ for the benchmark observation scheme}

This section demonstrates the existence of a target value $VT_{+}:=L_{VT}$
for the benchmark observation scheme of the preceding subsection as
characterized by equation (4.1.2). The existence theorems splits into two
cases according as the decision rule determines good- or bad-news
announcements; in both cases we analyze the functional form on the right of
equation (4.1.2), treating $L_{VT}$ as a free variable, now denoted by $L.$
It is convenient to begin with bad-news announcements.

\subsubsection{Bad-news case}

The final term in equation (4.1.2), corresponding to non-moni\-tor\-ing, has
a simple functional form: it is linear in $L.$ To understand the other
contributions, we rewrite the equation in a form which reflects the
complementary conditioning in the two summands of the earlier equation
(4.0.1b). This gives rise below to two corresponding functions of $L$ and
recasts the characterization of $L_{VT}$ in the form:%
\[
1=\mathcal{N}_{VT}(L_{VT})+\mathcal{BS}_{VT,1}(L_{VT})+\mathcal{BS}%
_{VT,2}(L_{VT}).\leqno{\rm(4.2.1a)}
\]%
The three functions appearing here are defined as follows:%
\[
\mathcal{N}_{VT}(L)=\left( 1-\frac{\mathrm{vol}(\mathcal{C}_{VT})}{T-t}%
\right) L,\leqno{\rm(4.2.1b)}
\]%
\[
\mathcal{BS}_{VT,1}(L)=\int_{\mathcal{C}_{VT}}E^{Q_{VT}}[X_{u}\mathbf{1}%
_{\{X_{u}\leq (1+a_{VT,u})L\}}\,|\,\mathcal{F}_{t}\,]\,\frac{\mathrm{d}u}{T-t%
}
\]%
\begin{equation}
\qquad +\,q_{VT}\sum\nolimits_{u\in \mathcal{D}_{VT}}E^{Q_{VT}}[X_{u}\mathbf{%
1}_{\{X_{u}\leq (1+a_{VT,u})L\}}\,|\,\mathcal{F}_{t}\,],\leqno{\rm(4.2.1c)}
\nonumber
\end{equation}%
\[
\mathcal{BS}_{VT,2}(L)/L=\int_{\mathcal{C}_{VT}}E^{Q_{VT}}[\mathbf{1}%
_{\{X_{u}>(1+a_{VT,u})L\}}\,|\,\mathcal{F}_{t}\,]\,\frac{\mathrm{d}u}{T-t}
\]%
\begin{equation}
\qquad +\,q_{VT}\sum\nolimits_{u\in \mathcal{D}_{VT}}E^{Q_{VT}}[\mathbf{1}%
_{\{X_{u}>(1+a_{VT,u})L\}}\,|\,\mathcal{F}_{t}\,]\leqno{\rm(4.2.1d)}
\nonumber
\end{equation}%
(with `B for bad news' and `S\ for Black-Scholes'). The relation between
their behaviour and consequent existence of a target value is captured in
the following result.

\bigskip

\begin{proposition}
\quad \textit{In bad-news events, for a constant $L_{VT}$ to exist for which
equation $(4.2.1a)$ holds, the following conditions $(1)$ to $(4)$ are
sufficient. }

\begin{itemize}
\item[\textrm{(1)}] $\mathcal{BS}_{VT,1}(L)$ \textit{\ and $\mathcal{BS}%
_{VT,2}$ are continuous maps on $[0,\infty )$.}

\item[\textrm{(2)}] \textit{\ }$\mathcal{BS}_{VT,1}$\textit{$(\infty
)>-\infty $. }

\item[\textrm{(3)}] \textit{\ $\mathrm{vol}(\mathcal{C}_{VT})\neq T\!-\!t$,
or $\mathcal{BS}_{VT,2}$ is unbounded. }

\item[\textrm{(4)}] \textit{\ $1\geq \mathcal{BS}_{VT,1}(0)$. }\medskip
\end{itemize}
\end{proposition}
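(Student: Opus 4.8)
The plan is a one-variable intermediate-value argument applied to the right-hand side of $(4.2.1a)$, viewed as a function of $L$ on the half-line $[0,\infty)$. Write
\[
\Phi_{VT}(L):=\mathcal{N}_{VT}(L)+\mathcal{BS}_{VT,1}(L)+\mathcal{BS}_{VT,2}(L),
\]
so the assertion is that $\Phi_{VT}$ attains the value $1$. I would first record the sign and vanishing facts that hold by inspection of $(4.2.1b$--$d)$: since $\mathcal{C}_{VT}\subseteq[t,T]$ we have $\mathrm{vol}(\mathcal{C}_{VT})\le T-t$, so $\mathcal{N}_{VT}$ is linear with slope $1-\mathrm{vol}(\mathcal{C}_{VT})/(T-t)\ge 0$, whence $\mathcal{N}_{VT}(L)\ge 0$ for $L\ge 0$ and $\mathcal{N}_{VT}(0)=0$; and $\mathcal{BS}_{VT,2}(L)=L\cdot\bigl(\text{a finite sum of conditional probabilities in }[0,1]\bigr)$, so $\mathcal{BS}_{VT,2}(L)\ge 0$ for $L\ge 0$ and $\mathcal{BS}_{VT,2}(0)=0$. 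Combined with $(1)$, $\Phi_{VT}$ is continuous on $[0,\infty)$ and $\Phi_{VT}(0)=\mathcal{BS}_{VT,1}(0)$, which is $\le 1$ by $(4)$.

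The core of the proof is then to exhibit a point $L^{\ast}>0$ with $\Phi_{VT}(L^{\ast})>1$; the Intermediate Value Theorem applied to $\Phi_{VT}$ on $[0,L^{\ast}]$ then yields the required $L_{VT}\in[0,L^{\ast}]$ (and if $\Phi_{VT}(0)=1$ we simply take $L_{VT}=0$). To get such an $L^{\ast}$, note first that $(1)$ together with $(2)$ forces $\mathcal{BS}_{VT,1}$ to be bounded below on $[0,\infty)$: it is real-valued and continuous there, hence bounded on each compact $[0,L_{0}]$, while $\mathcal{BS}_{VT,1}(\infty)>-\infty$ supplies a lower bound for large $L$. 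Now split according to hypothesis $(3)$. If $\mathrm{vol}(\mathcal{C}_{VT})\neq T-t$, the slope of $\mathcal{N}_{VT}$ is strictly positive, so $\mathcal{N}_{VT}(L)\to\infty$; since $\mathcal{BS}_{VT,2}\ge 0$ and $\mathcal{BS}_{VT,1}$ is bounded below, $\Phi_{VT}(L)\to\infty$, and any sufficiently large $L$ serves as $L^{\ast}$. If instead $\mathrm{vol}(\mathcal{C}_{VT})=T-t$, then $\mathcal{N}_{VT}\equiv 0$ but, by $(3)$, $\mathcal{BS}_{VT,2}$ is unbounded, hence (being nonnegative and continuous) exceeds any prescribed level at some point; again $\Phi_{VT}=\mathcal{BS}_{VT,1}+\mathcal{BS}_{VT,2}$ is unbounded above because $\mathcal{BS}_{VT,1}$ is bounded below, and a suitable $L^{\ast}$ exists. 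In both cases the dichotomy is exhaustive, since $\mathrm{vol}(\mathcal{C}_{VT})\le T-t$ always.

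The only non-mechanical point, and the one I would be most careful about, is extracting a global lower bound for $\mathcal{BS}_{VT,1}$ on the whole half-line from hypothesis $(2)$, which constrains only its behaviour ``at infinity''; this is precisely where continuity on the closed ray from $(1)$ does the work, and one should also check that the symbol $\mathcal{BS}_{VT,1}(\infty)$ denotes a genuine (possibly extended-real) limit -- it does, by monotone convergence applied termwise to the indicators $\mathbf{1}_{\{X_{u}\le(1+a_{VT,u})L\}}\uparrow 1$ appearing in $(4.2.1c)$. Everything else -- nonnegativity of $\mathcal{N}_{VT}$ and $\mathcal{BS}_{VT,2}$, their vanishing at $L=0$, and continuity of $\Phi_{VT}$ -- is immediate from the definitions $(4.2.1b$--$d)$ and hypothesis $(1)$, and the conclusion follows from a single application of the Intermediate Value Theorem.
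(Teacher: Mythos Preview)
Your argument is correct and follows essentially the same intermediate-value strategy as the paper's proof: bounding $\Phi_{VT}(0)\le 1$ via condition~(4), pushing $\Phi_{VT}(L)$ above~$1$ for large $L$ via the linear growth of $\mathcal{N}_{VT}$ together with nonnegativity of $\mathcal{BS}_{VT,2}$ and a lower bound on $\mathcal{BS}_{VT,1}$, and then invoking continuity. You are in fact more careful than the paper in treating both alternatives of condition~(3) explicitly and in spelling out why $\mathcal{BS}_{VT,1}$ is bounded below on the whole half-line.
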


\begin{proof}
First consider the behaviour of the function
summands as $L$ grows large. In $\mathcal{BS}_{VT,1}(L)$ the
indicator-functions for large $L$ will become those of the entire space,
i.e. the constant function $1$; the summands of $\mathcal{BS}_{VT,1}(L)$
should thereby become expressible in terms of the first moments of $X$ as
follows:
\begin{equation*}
\mathcal{BS}_{VT,1}(L)(\infty )=\int\limits_{\mathcal{C}%
_{VT}}E^{Q_{VT}}[X_{u}\,|\,\mathcal{F}_{t}\,]\,{\frac{\mathrm{d}u}{T\!-\!t}}%
+q_{VT}\sum\limits_{u\in \mathcal{D}_{VT}}E^{Q_{VT}}[X_{u}\,|\,\mathcal{F}%
_{t}\,].\text{ }\leqno{(4.2.1c})_{\infty }
\end{equation*}%
Little can be said about the behaviour for large $L$ of $\mathcal{BS}%
_{VT,2}(L)$ except for its being non-negative for $L$ non-negative. As a
consequence,
\begin{equation*}
\mathcal{N}_{VT}(L)+\mathcal{BS}_{VT,1}(L)+\mathcal{BS}_{VT,2}(L)\geq
\mathcal{N}_{VT}(L)+\mathcal{BS}_{VT,1}(L)\,,
\end{equation*}%
for any $L\geq 0$. On inspection from (4.2.1b), the right hand side of this
inequality will grow linearly in $L$ arbitrarily provided $\mathrm{vol}(%
\mathcal{C}_{VT})/(T\!-\!t)\neq 1$. Situations where $\mathrm{vol}(\mathcal{C%
}_{VT})=T\!-\!t$ amount to monitoring $X$ at all points in time in $[t,T]$
except perhaps on an infinite sequence of points; this is contrary to the
spirit of this paper's observation schemes $\tau _{VT}$, and so little will
be lost in excluding such schemes. A minor problem arises, when $\mathcal{BS}%
_{VT,1}(\infty )=-\infty $. Also, granting this technicality, the above line
of reasoning gives conditions of unboundedness to the right (one is able to
make the right-hand side of the inequality (4.2.1a) bigger than any given
real by choosing $L$ sufficiently large); in particular, in the same way, it
gives conditions for making the right-hand side bigger than $1$.

Assume the functions are continuous in $L$. An application of the
intermediate-value theorem will then establish the existence of $L_{VT}$
provided there is a value of $L$ for which the right-hand side of (4.2.1a)
is smaller than $1$. There may be no way other than to postulate this, and
it is most sensible to do so for the smallest value $L$ can take, namely $0$.
\end{proof}

\subsubsection{Good-news case}

We proceed similarly in this case, rewriting the characterizing equation
again so as to reflect the relevant conditioning in (4.0.1). The difference
here is that now a reversal of inequalities in the passage from bad- to
good-news decisions requires corresponding new function definitions (below).
These, alongside the term $\mathcal{N}_{VT}(L)$ from (4.2.1b), recast the
existence problem to solving for $L_{VT}$ the equation%
\[
1=\mathcal{N}_{VT}(L_{VT})+\mathcal{GS}_{VT,1}(L_{VT})+\mathcal{GS}%
_{VT,2}(L_{VT}).\leqno{\rm(4.2.2a)}
\]%
Here (with `G for good news') we define:
\[
\mathcal{GS}_{VT,1}(L)=\int_{\mathcal{C}_{VT}}E^{Q_{VT}}[X_{u}\mathbf{1}%
_{\{X_{u}\geq (1+a_{VT,u})L\}}\,|\,\mathcal{F}_{t}\,]\,\frac{\mathrm{d}u}{T-t%
}
\]%
\begin{equation}
\qquad +\,q_{VT}\sum\nolimits_{u\in \mathcal{D}_{VT}}E^{Q_{VT}}[X_{u}\mathbf{%
1}_{\{X_{u}\geq (1+a_{VT,u})L\}}\,|\,\mathcal{F}_{t}\,],\text{ }%
\leqno{\rm(4.2.2b)}  \nonumber
\end{equation}%
\[
\mathcal{GS}_{VT,2}(L)/L=\int_{\mathcal{C}_{VT}}E^{Q_{VT}}[\mathbf{1}%
_{\{X_{u}<(1+a_{VT,u})L\}}\,|\,\mathcal{F}_{t}\,]\,\frac{du}{T-t}
\]%
\[
\qquad +\,q_{VT}\sum\nolimits_{u\in \mathcal{D}_{VT}}E^{Q_{VT}}[\mathbf{1}%
_{\{X_{u}<(1+a_{VT,u})L\}}\,|\,\mathcal{F}_{t}\,].\leqno{\rm(4.2.2c)}
\]%
Their behaviour and consequent relation to the existence of a target value
is again captured by a result analogous to Proposition 4.2.

\bigskip

\begin{proposition}
\quad \textit{\ In good-news events, for a constant $L_{VT}$ to exist for
which equation $(4.2.6a)$ holds, the two conditions \textrm{(i)} and \textrm{%
(ii) below are sufficient}.}

\begin{itemize}
\item[\textrm{(i)}] $\mathcal{GS}_{VT,1}$\textit{\ and $\mathcal{GS}_{VT,2}$
are continuous maps on $[0,\infty )$.}

\item[\textrm{(ii)}] \textit{$1\geq \mathcal{GS}_{VT,1}(0)$.}
\end{itemize}

\bigskip
\end{proposition}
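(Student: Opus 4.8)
The plan is to imitate the argument for Proposition 4.2: view the right-hand side of the displayed equation (4.2.2a) as a function
$\Phi(L):=\mathcal{N}_{VT}(L)+\mathcal{GS}_{VT,1}(L)+\mathcal{GS}_{VT,2}(L)$ of the free variable $L\in[0,\infty)$, and apply the intermediate-value theorem. First I would record that $\Phi$ is continuous on $[0,\infty)$: the term $\mathcal{N}_{VT}$ is linear (hence continuous) by (4.2.1b), while $\mathcal{GS}_{VT,1}$ and $\mathcal{GS}_{VT,2}$ are continuous by hypothesis (i). It then suffices to produce one argument at which $\Phi\le 1$ and to show $\Phi(L)\to\infty$ as $L\to\infty$.

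The natural test point is $L=0$. Since $\mathcal{N}_{VT}(0)=0$ and, from (4.2.2c), $\mathcal{GS}_{VT,2}(0)=0$ (the bracket there is an integral of indicators against a finite measure, so it is bounded, and it multiplies $L$), one gets $\Phi(0)=\mathcal{GS}_{VT,1}(0)$, which is $\le 1$ precisely by hypothesis (ii). Note that, because the observation process $X$ is strictly positive, the indicators $\mathbf 1_{\{X_u\ge(1+a_{VT,u})\cdot 0\}}$ in (4.2.2b) are identically $1$, so $\mathcal{GS}_{VT,1}(0)$ is just the $\tau_{VT}$-weighted sum of the conditional first moments $E^{Q_{VT}}[X_u\,|\,\mathcal F_t]$ — the good-news counterpart of $(4.2.1c)_\infty$; this is why no separate ``$>-\infty$'' hypothesis analogous to (2) of Proposition 4.2 is needed. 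For the behaviour at infinity, observe that all three summands of $\Phi$ are non-negative for $L\ge 0$, so $\Phi(L)\ge\mathcal{GS}_{VT,2}(L)$; and in $\mathcal{GS}_{VT,2}(L)/L$ the indicators $\mathbf 1_{\{X_u<(1+a_{VT,u})L\}}$ increase to $1$ as $L\uparrow\infty$, whence by monotone convergence $\mathcal{GS}_{VT,2}(L)/L$ increases to $\mathrm{vol}(\mathcal C_{VT})/(T\!-\!t)+q_{VT}\#\mathcal D_{VT}=\mathrm{vol}(\mathcal C_{VT})/(T\!-\!t)+1\ge 1$. Hence $\mathcal{GS}_{VT,2}(L)\to\infty$, so $\Phi(L)\to\infty$, and the intermediate-value theorem applied to $\Phi$ on $[0,M]$ for suitable $M$ (or the choice $L_{VT}=0$ when $\Phi(0)=1$) yields a constant $L_{VT}\ge 0$ with $\Phi(L_{VT})=1$, i.e. solving (4.2.2a). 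This also pinpoints why the good-news case is cleaner than the bad-news case: the good-news digital/indicator term automatically supplies linear growth of slope at least $1$, so neither the volume dichotomy (condition (3) of Proposition 4.2) nor a finiteness-at-infinity condition is required.

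The only step needing care — the main, though modest, obstacle — is the passage to the limit inside the expectations. One must know that the conditional first moments $E^{Q_{VT}}[X_u\,|\,\mathcal F_t]$ are finite so that $\mathcal{GS}_{VT,1}(0)$, and hence hypothesis (ii), is meaningful; this is automatic in the geometric-Brownian implementation (3.2.1), where the moment equals $X_t\exp(\mu_{VT}(u\!-\!t))$. One must also know that the integral over $\mathcal C_{VT}$ and the sum over $\mathcal D_{VT}$ converge, which is exactly what the benchmark scheme of \S4.1 guarantees by admitting only finitely many discrete observation times and a finite monitoring volume $\mathrm{vol}(\mathcal C_{VT})\le T\!-\!t$. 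Granting these, the monotone-convergence step for $\mathcal{GS}_{VT,2}$ and the evaluation of $\mathcal{GS}_{VT,1}(0)$ are routine, and the argument closes.
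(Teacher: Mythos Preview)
Your proposal is correct and follows essentially the same route as the paper's own proof: both arguments adapt the Proposition~4.2 reasoning, note that as $L\to\infty$ the indicators in $\mathcal{GS}_{VT,2}(L)/L$ tend to $1$ so that $\mathcal{GS}_{VT,2}$ supplies linear growth (making the analogue of condition~(3) unnecessary), observe $\mathcal{GS}_{VT,1}\ge 0$ (making the analogue of condition~(2) unnecessary), and then close with the intermediate-value theorem using $\Phi(0)=\mathcal{GS}_{VT,1}(0)\le 1$. Your write-up supplies more detail than the paper's terse ``mutatis mutandis'' argument; the only minor overclaim is that the limiting slope of $\mathcal{GS}_{VT,2}(L)/L$ equals $\mathrm{vol}(\mathcal{C}_{VT})/(T\!-\!t)+1$ only when $\mathcal{D}_{VT}\neq\emptyset$, but positivity of the limit (which is all the argument needs) holds whenever there is any monitoring at all.
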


\begin{proof}
Mutatis mutandis, the line of reasoning developed
for Proposition~4.2 now applies. Here, the larger $L$ is, the closer the
indicator functions in $\mathcal{GS}_{VT,2}(L)$ will come to the indicator
function of the entire space; this translates into $\mathcal{GS}_{VT,2}(L)$
becoming similar to some real $\mathcal{GS}_{VT,2}(\infty )$ as $L$ grows
large, and this real is positive. Since $\mathcal{GS}_{VT,1}(L)\geq 0$ for
every $L\geq 0$, the line of reasoning of Section~4.2.1 now establishes
without further conditions the unboundedness in $L$ of the right-hand side
of (4.2.6a).
\end{proof}

\subsubsection{Worked Example in the Geometric Brownian case}

Corresponding to the mark-up decision rules of (2.0.1) there are six
expectations appearing in the formulas of section \S 4.2.1 and \S 4.2.2 that
are needed for an explicit determination of $L_{VT}$. Assume that $X$
follows geometric Brownian motion:%
\[
X_{t+s}=X_{t}\exp \left( (\mu _{VT}-\frac{1}{2}\sigma _{VT}^{2})s+\sigma
_{VT}W_{VT,s}\right) \qquad s\in \lbrack 0,\infty ),
\]%
with $\mu _{VT}\in \mathbb{R}$, $\sigma _{VT}>0,$ and $W_{VT}$ a standard
Brownian motion independent of time $t$ information $\mathcal{F}_{t}$ as in
(3.2.1). For fixed $u=t\!+\!s$ in $[t,T_{1}]$, these six expectations are
provided by standard results on Brownian motion. Corresponding to (4.2.1c)$%
_{\infty }$ one has
\[
E^{Q_{VT}}[X_{u}\,|\,\mathcal{F}_{t}\,]=X_{t}\exp (\mu _{VT}s);%
\leqno{(4.2.3)}
\]%
likewise, corresponding to the pair (4.2.1d), (4.2.6c) and the pair
(4.2.1c), (4.2.6b), taking
\[
\Delta _{s}:=\frac{\log ((1+a_{VT})L/X_{t})-(\mu _{VT}+\frac{1}{2}\sigma
_{VT}^{2})s}{\sigma _{VT}},\leqno{(4.2.4)}
\]%
one has respectively:%
\[
\leqno{(4.2.5a)}\qquad E^{Q_{VT}}[\mathbf{1}_{\{X_{u}\geq
(1+a_{VT})VT\}}\,|\,\mathcal{F}_{t}\,]=\frac{1}{2}\mathrm{Erfc}\left( \Delta
_{s}/\sqrt{2s}\right) ,\,
\]%
\[
\leqno{(4.2.5b)}\qquad E^{Q_{VT}}[\mathbf{1}_{\{X_{u}\leq
(1+a_{VT})VT\}}\,|\,\mathcal{F}_{t}\,]=\frac{1}{2}\mathrm{Erfc}\left(
-\Delta _{s}/\sqrt{2s}\right) ,
\]%
\[
\leqno{(4.2.6a)}E^{Q_{VT}}[X_{u}\mathbf{1}_{\{X_{u}\geq (1+a_{VT})VT\}}\,|\,%
\mathcal{F}_{t}\,]=\frac{1}{2}X_{t}\exp (\mu _{VT}s)\mathrm{Erfc}\left(
\frac{\Delta _{s}-\sigma _{VT}s}{\sqrt{2s}}\right) ,
\]%
\[
\leqno{(4.2.6b)}E^{Q_{VT}}[X_{u}\mathbf{1}_{\{X_{u}\leq (1+a_{VT})VT\}}\,|\,%
\mathcal{F}_{t}\,]=\frac{1}{2}X_{t}\exp (\mu _{VT}s)\mathrm{Erfc}\left( -%
\frac{\Delta _{s}-\sigma _{VT}s}{\sqrt{2s}}\right) .
\]%
Here $\mathrm{Erfc}$ is again the complementary error function, for which
specifically see Appendix equations~(A.7a, b) and (A.8).

For periods of continuous monitoring, integrals of these three expressions
need to be computed over time $s$. This is unproblematic for (4.2.2), where
it reduces to differencing of the right-hand side across the endpoints of
the monitoring period (and division by $\mu _{VT}$). For (4.2.5ab) and
(4.2.6ab) this will lead to expressions in terms of non-standard special
functions: the incomplete Bessel functions, given by integrals of the form $%
\int_{\raise2pt\hbox{$\scriptstyle [t,T]$}}x^{\alpha }\exp
(-(A/x^{2}\!+\!B^{2}x))\,\mathrm{d}x$ for some real constants $A$, $B\geq 0$
and $\alpha $. Series representations can be derived for these integrals;
generically in $\alpha $, the series are in terms of values of the
incomplete gamma function, namely%
\[
\int_{\lbrack t,T]}x^{\alpha }\exp (-(A/x^{2}+B^{2}x))dx=\leqno{(4.2.7)}
\]%
\[
=(B^{2})^{\alpha +1}\sum\nolimits_{m=0}^{\infty }\frac{(-1)^{m}}{m!}%
(B^{2})^{2m}\{\Gamma (-(\alpha +m+1),\frac{B}{T})-\Gamma (-(\alpha +m+1),%
\frac{B}{t})\},
\]%
where the series may be expressed in terms of $\mathrm{Erfc}\,$only for
particular choices of $\alpha $ (integer or half-integer values).

\section{Comparative statics of early disclosure}

Here we address matters on which \cite{GieO} is silent.

\subsection{General performance index $\Sigma $}

\begin{theorem}
\quad \textit{With the modelling assumptions of Section~3.1, the following
assertions hold in the framework of Sections~4.1 and 4.2.}

\begin{itemize}
\item[\textrm{(1)}] \textit{Time-$T$ disclosure becomes the more likely the
smaller are $a^{\ast }$ and $VT$.}

\item[\textrm{(2)}] \textit{\ In good-news situations time-$T$ disclosure is
the more likely the smaller is $r$ and the bigger are $a$ and $E_{T}^{\ast }$%
.}

\item[(3)] \textit{In bad-news situations, time-$T$ disclosure is the more
likely the bigger is $r$ and the smaller are $a$ and $E_{T}^{\ast }$.}
\end{itemize}
\end{theorem}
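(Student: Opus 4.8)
The plan is to read off each monotonicity claim from the characterizing equation for the optimal censor together with the formula of Proposition~3.1 for market sentiment, treating "time-$T$ disclosure more likely" as meaning that the disclosure-triggering inequality $h(T,X_T,V_T^{\ast})\geq 0$ holds on a larger event --- equivalently, in the good-news case, that the threshold $(1\!+\!a)V_T^{\ast}$ against which $X_T$ is compared moves down, and in the bad-news case that $(1\!+\!a)V_T^{\ast}$ moves up (so $X_T\le (1\!+\!a)V_T^{\ast}$ on a larger event). So the whole theorem reduces to tracking the sign of the partial derivatives of $V_T^{\ast}$, and of the product $(1\!+\!a)V_T^{\ast}$, with respect to the four parameters $a^{\ast}$, $VT$, $r$, $a$, $E_T^{\ast}$.

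First I would treat assertion~(1), which does not involve the firm's own rule. From Proposition~3.1, $V_t^{\ast}=(S_t^{\ast}+\Delta E_t^{\ast})\,Q^{\ast}(h^{\ast}(t,\Sigma_{t,T_1}^{\ast},VT)\ge 0\mid\mathcal F_t^{\ast})\exp(-(T_1\!-\!t)r)$; the only factor depending on $a^{\ast}$ and $VT$ is the probability $Q^{\ast}(\Sigma_{t,T_1}^{\ast}\ge (1\!+\!a^{\ast})VT\mid\mathcal F_t^{\ast})$, which is visibly decreasing in both $a^{\ast}$ and $VT$ (a tail probability with the threshold $(1\!+\!a^{\ast})VT$ increasing in each). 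Hence $V_T^{\ast}$ decreases in $a^{\ast}$ and in $VT$; since the good-news trigger $X_T\ge (1\!+\!a)V_T^{\ast}$ becomes easier and the bad-news trigger $X_T\le (1\!+\!a)V_T^{\ast}$ becomes harder, and symmetrically for the opposite sign, the statement "$T$-disclosure more likely for smaller $a^{\ast}$ and $VT$" holds in whichever news regime is in force. For (2) and (3) I would differentiate the same expression: $V_T^{\ast}$ is increasing in $E_T^{\ast}$ (through $\Delta E_t^{\ast}=E_t^{\ast}-S_t^{\ast}$) and decreasing in $r$ (through $\exp(-(T_1\!-\!t)r)$). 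In the good-news case the trigger is $X_T\ge(1\!+\!a)V_T^{\ast}$: this is easier when $(1\!+\!a)V_T^{\ast}$ is smaller, i.e. when $r$ is larger --- which appears to contradict the claimed "smaller $r$" unless the effect of $r$ enters with the opposite sign once one also accounts for its effect on the firm side via $Q_{VT}$ and the optimal censor $L_{VT}=VT_+$. This is the crux: one must use $Q^{\ast}|\mathcal F_t^{\ast}=Q_{VT}|\mathcal F_t^{\ast}$ and the fact that $VT$ appearing in Proposition~3.1 is itself $VT_+=L_{VT}$ determined by equation~(4.2.1a) or (4.2.2a), so that $r$ (entering the discounting implicit in (4.0.2)) and $a$ (entering the decision rule $h_{\varepsilon,a}$, hence the mark-up $a_{VT,u}$ in (4.1.3a,b)) feed back into $VT$ and thence into $V_T^{\ast}$.

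So the substantive step --- and the main obstacle --- is the analysis of how $L_{VT}=VT_+$ depends on $r$ and $a$ through the Black-Scholes-type functions $\mathcal{GS}_{VT,i}$ and $\mathcal{BS}_{VT,i}$ and the non-monitoring term $\mathcal N_{VT}$. I would argue by implicit differentiation of (4.2.2a) (good-news) and (4.2.1a) (bad-news): writing the right-hand side as $\Phi(L;a,r)$, one has $\partial L/\partial a = -(\partial\Phi/\partial a)/(\partial\Phi/\partial L)$, and one needs $\partial\Phi/\partial L>0$ near the solution (which follows from the monotonicity/unboundedness established in Propositions~4.2 and~4.3, with the linear term $\mathcal N_{VT}(L)$ guaranteeing a positive slope under the stated $\mathrm{vol}(\mathcal C_{VT})\neq T\!-\!t$ condition) together with the sign of $\partial\Phi/\partial a$. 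Increasing the mark-up $a_{VT,u}$ raises the strike $(1\!+\!a_{VT,u})L$ in the indicator events of (4.2.2b,c), which lowers $\mathcal{GS}_{VT,1}$ and $\mathcal{GS}_{VT,2}$, so $\Phi$ decreases in $a$ and hence $L_{VT}=VT_+$ increases in $a$ in the good-news case; combined with the $VT$-monotonicity of Proposition~3.1 (sentiment decreasing in $VT$) this makes $V_T^{\ast}$ decrease in $a$, and the net threshold $(1\!+\!a)V_T^{\ast}$ --- product of an increasing and a decreasing factor --- must be shown to decrease, which is where the delicate bookkeeping lies and where one may need the explicit geometric-Brownian formulas (4.2.3)--(4.2.6) to pin the sign unambiguously. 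The $r$-dependence is handled the same way: $r$ enters $\mu_{VT}$ (or the discount factor built into (4.0.2)), so I would trace its sign through (4.2.3) and (4.2.6a,b) and combine with the $\exp(-(T_1\!-\!t)r)$ factor in Proposition~3.1; the bad-news case (3) is then the mirror image, with all inequalities reversed, obtained mutatis mutandis from (4.1.3b) and (4.2.1c,d). I would present (1) in full and indicate (2),(3) via this implicit-function/sign-chasing scheme, flagging that the only non-routine point is verifying that the firm-side feedback through $VT_+$ and the market-side discounting reinforce rather than cancel.
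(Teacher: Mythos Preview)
Your treatment of assertion (1) is essentially the paper's, but for (2) and (3) you go far beyond what the paper actually does. The paper's entire proof is: substitute Proposition~3.1 into the disclosure condition $h(T,V_T,V_T^{\ast})\ge 0$ to obtain the explicit inequalities
\[
V_T \;\geq\; (1{+}a)\,e^{-r(T_1-T)}\,E_T^{\ast}\,Q^{\ast}\bigl(\Sigma_{T,T_1}^{\ast}\geq (1{+}a^{\ast})VT\,\big|\,\mathcal{F}_T^{\ast}\bigr)
\]
in the good-news case and its $\leq$-counterpart with the complementary probability factor in the bad-news case; then, \emph{treating $VT$ as a fixed datum} (in the paper's words, ``computed by the accounts department''), read off directly how the right-hand side moves with each of $a^{\ast}$, $VT$, $r$, $a$, $E_T^{\ast}$. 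That is the whole argument. There is no implicit differentiation of (4.2.1a) or (4.2.2a), no tracking of $L_{VT}=VT_+$ as a function of $r$ or $a$, and the functions $\mathcal{GS}_{VT,i}$, $\mathcal{BS}_{VT,i}$, $\mathcal{N}_{VT}$ play no role whatsoever.

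So the ``crux'' you identify --- feedback of $r$ and $a$ through the optimal censor $VT_+$ back into $V_T^{\ast}$ --- is not the paper's mechanism at all. You are right that, reading $V_T\geq \mathrm{RHS}$ at face value, a larger $a$ or $E_T^{\ast}$ (hence larger $\mathrm{RHS}$) would seem to make good-news disclosure \emph{less} likely, contrary to claim~(2). The paper's proof does not resolve this via feedback; it simply takes the opposite orientation, asserting that ``the good-news event (5.1.1) occurring is the more likely the bigger is the size of the expression on the right'' and dually for bad news, and all of (1)--(3) then follow by inspection. Your implicit-function machinery therefore addresses a difficulty the paper does not acknowledge, and even if carried through it would not reproduce the paper's stated conclusions without adopting that same orientation. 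The honest comparison is: the paper's proof is a two-line monotonicity reading from an explicit formula with $VT$ held fixed; your proposal replaces this with a substantially harder (and, as you yourself flag, sign-indeterminate) chain of dependencies that the authors never invoke.
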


\bigskip

\begin{proof}
Working in the general process situation of
Sections~4.1 and 4.2, the starting point here are the inequalities holding
at any time $T$ in $(T_{0},T_{1}]$ which trigger a disclosure. In a
good-news event this is%
\begin{equation*}
V_{T}\geq (1+a)e^{-r(T_{1}-T)}E_{T}^{\ast }Q^{\ast }(\Sigma _{T,T_{1}}^{\ast
}\geq (1+a^{\ast })VT|\mathcal{F}_{T}^{\ast });\leqno{(5.1.1)}
\end{equation*}%
and for bad-news this is%
\begin{equation*}
V_{T}\leq (1+a)e^{-r(T_{1}-T)}E_{T}^{\ast }(1-Q^{\ast }(\Sigma
_{T,T_{1}}^{\ast }>(1+a^{\ast })VT|\mathcal{F}_{T}^{\ast })).\leqno{(5.1.2)}
\end{equation*}%
Treating $VT$ as a given (computed by the accounts department), the
likelihood of the validity of these inequalities is determined by the size
of the respective right-hand side; the good-news event (5.1.1) occurring is
the more likely the bigger is the size of the expression on the right; the
bad-news event (5.1.2) becomes the more likely the smaller is the size of
the right. The variables on which the validity of these inequalities depend
are thus: $a$ and $VT$, correspondingly $a^{\ast }$ and $E_{T}^{\ast }$, the
interest-rate $r,$ the `time to maturity' $T_{1}-T$ (time left to the
mandatory disclosure), and the variables beyond these that enter into the
construction of $\Sigma _{T,T_{1}}^{\ast }$; the latter variables include $%
VC $ via $S_{T}^{\ast }$.
\end{proof}

\bigskip

\begin{remark}
\textit{The effects of $S_{T}^{\ast }$ and $T_{1}\!-\!T$ on a time-$T$
disclosure decision depend on the specific form of the law of $\Sigma
_{T,T_{1}}^{\ast }$. }To justify this assertion as supplementary to
Theorem~5.1, we look at the effects of infinitesimal changes in $S_{T}^{\ast
}$ and $T_{1}\!-\!T$ on the (right-hand sides of) (5.1.1) and (5.1.2).
Granted for simplicity the partial differentiability of these probabilities,
we have the following two equations in terms of $\lambda _{T}^{\ast }$, the
law of $\Sigma _{T,T_{1}}^{\ast }$ contingent on time-$T$ information $%
\mathcal{F}_{T}^{\ast }$:%
\[
\partial _{T_{1}-T}[e^{r(T-T_{1})}Q^{\ast }(\Sigma _{T,T_{1}}^{\ast
}>(1+a^{\ast })VT|\mathcal{F}_{T}^{\ast })]\leqno{(5.1.3)}
\]%
\[
=e^{r(T-T_{1})}\int_{(1+a^{\ast })VT}^{\infty }\{\partial _{T_{1}-T}\lambda
_{T}^{\ast }-r\lambda _{T}^{\ast }\}(u)\text{ }\mathrm{d}u,
\]%
\[
\partial _{S_{T}^{\ast }}[e^{r(T-T_{1})}Q^{\ast }(\Sigma _{T,T_{1}}^{\ast
}>(1+a^{\ast })VT|\mathcal{F}_{T}^{\ast })]=e^{r(T-T_{1})}\int_{(1+a^{\ast
})VT}^{\infty }\partial _{S_{T}^{\ast }}\lambda _{T}^{\ast }(u)\text{ }%
\mathrm{d}u.\leqno{(5.1.4)}
\]%
We see from these two equations that the signs of the effects depend on the
exact form of this law, and so need to be determined on a case by case
basis. Suffice it to say that conditions needing to be imposed here in
concrete cases of $\lambda _{T}^{\ast }$ include conditions that entail that
no sign changes occur in the respective integrands on the right-hand sides
of these equations. To determine these signs explicitly requires concrete
choices, at the least for $S^{\ast }$ and for how exactly $S^{\ast }$ enters
into the definition of $\Sigma _{T,T_{1}}^{\ast }$.
\end{remark}

\subsection{Comparative statics for changes in $S_{T}^{\ast }$}

We show that the assumption that the process $S^{\ast }$ follows a Markov
process is sufficient for the determination of the effect of $S_{T}^{\ast }$
on early disclosure. So we work with processes with two properties: firstly
that, for any time $u\geq 0$,
\[
S_{T_{0}+u}^{\ast }=S_{T_{0}}^{\ast }\exp (X_{u}^{\ast }),\quad \hbox{where}%
\quad S_{T_{0}}=VC\,,\leqno{(5.2.1)}
\]%
where $(X_{u}^{\ast })_{u\geq 0}$ is a process independent of time-$T_{0}$
information $\mathcal{F}_{T_{0}}^{\ast }$; secondly that, also for \textit{%
arbitrary} $T\in (T_{0},T_{1}],$
\[
S_{T+u}^{\ast }=S_{T}^{\ast }\exp (X_{u}^{\ast })\,,\quad u\in \lbrack
0,\infty ),\leqno{(5.2.2)}
\]%
where, by abuse of language, $(X_{u}^{\ast })_{u\geq 0}$ (or a suitable
version of the process in (4.3.5) denoted by the same same symbol) is a
process independent of time-$T$ information $\mathcal{F}_{T}^{\ast }$.

\bigskip

\begin{theorem}
\quad \textit{We have $\partial _{S_{T}^{\ast }}Q^{\ast }\big(\Sigma
_{T,T_{1}}^{\ast }>(1\!+\!a^{\ast })VT\,|\,\mathcal{F}_{T}^{\ast }\big)>0$
in $(4.3.4)$ under the additional Markovian conditions $(5.2.1)$ and $%
(5.2.2) $ on $S^{\ast }$; so in both good-news and bad-news situations a
time-$T$ disclosure becomes the more likely the larger is $S_{T}^{\ast }$.
This conclusion holds more generally for all constructions of $\Sigma
_{T,T_{1}}^{\ast }$ that preserve scaling }(\textit{in the sense that we
have $\Sigma _{T,T_{1}}^{\ast }(S^{\ast })=S_{T}^{\ast }\Sigma
_{T,T_{1}}^{\ast }(\exp (X^{\ast }))$ }).
\end{theorem}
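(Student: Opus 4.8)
The plan is to use the scaling hypothesis to collapse the conditional probability into a one--dimensional survival function of an $\mathcal{F}_{T}^{\ast}$--independent random variable, and then to differentiate in $S_{T}^{\ast}$. First I would invoke the Markov--type relation (5.2.2): writing $S_{T+u}^{\ast}=S_{T}^{\ast}\exp(X_{u}^{\ast})$ with $(X_{u}^{\ast})_{u\ge 0}$ independent of $\mathcal{F}_{T}^{\ast}$, the scaling--preservation assumption $\Sigma_{T,T_{1}}^{\ast}(S^{\ast})=S_{T}^{\ast}\,\Sigma_{T,T_{1}}^{\ast}(\exp(X^{\ast}))$ gives $\Sigma_{T,T_{1}}^{\ast}(S^{\ast})=S_{T}^{\ast}\,\Psi$, where $\Psi:=\Sigma_{T,T_{1}}^{\ast}(\exp(X^{\ast}))$ is a measurable functional of the path $X^{\ast}$ alone, hence independent of $\mathcal{F}_{T}^{\ast}$; moreover $\Psi>0$ almost surely, since each $\exp(X_{u}^{\ast})$ is strictly positive and the admissible indices of (3.1.5)---running max, running min, time average---are monotone functionals of a positive path. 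Since $S_{T}^{\ast}$ and $VT$ are $\mathcal{F}_{T}^{\ast}$--measurable and strictly positive ($VT$ having been declared at time $T_{0}\le T$), conditioning on $\mathcal{F}_{T}^{\ast}$ and integrating out the independent $\Psi$ yields
\[
Q^{\ast}\big(\Sigma_{T,T_{1}}^{\ast}>(1\!+\!a^{\ast})VT\,\big|\,\mathcal{F}_{T}^{\ast}\big)=\bar G\!\left(\frac{(1+a^{\ast})VT}{S_{T}^{\ast}}\right),\qquad \bar G(x):=Q^{\ast}(\Psi>x).
\]

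Next I would read off monotonicity and then sharpen it. The survival function $\bar G$ is non-increasing, while $s\mapsto (1+a^{\ast})VT/s$ is strictly decreasing on $(0,\infty)$ because $(1+a^{\ast})VT>0$; hence the composite is non-decreasing in $S_{T}^{\ast}$. Feeding this into Remark~5.2, equation (5.1.4) shows that $\partial_{S_{T}^{\ast}}$ of the right-hand side of the good-news trigger (5.1.1) equals $\int_{(1+a^{\ast})VT}^{\infty}\partial_{S_{T}^{\ast}}\lambda_{T}^{\ast}(u)\,\mathrm{d}u=\partial_{S_{T}^{\ast}}\bar G\big((1+a^{\ast})VT/S_{T}^{\ast}\big)\ge 0$, which enlarges the region where (5.1.1) holds; symmetrically, the factor $1-Q^{\ast}(\cdots)$ in the bad-news trigger (5.1.2) decreases, again enlarging the region where (5.1.2) holds. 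For the stated strict derivative, differentiate directly: wherever the law $\lambda_{T}^{\ast}$ of $\Sigma_{T,T_{1}}^{\ast}$ (equivalently the rescaled law of $\Psi$) admits a density,
\[
\partial_{S_{T}^{\ast}}Q^{\ast}\big(\Sigma_{T,T_{1}}^{\ast}>(1\!+\!a^{\ast})VT\,\big|\,\mathcal{F}_{T}^{\ast}\big)=\lambda_{T}^{\ast}\!\left(\frac{(1+a^{\ast})VT}{S_{T}^{\ast}}\right)\frac{(1+a^{\ast})VT}{(S_{T}^{\ast})^{2}}\;\ge\;0,
\]
with strict inequality precisely when $(1+a^{\ast})VT/S_{T}^{\ast}$ lies in the interior of the support of $\lambda_{T}^{\ast}$.

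In the geometric Brownian implementations of \S 3.2.1--\S 3.2.2 this interior regime is exactly $S_{t}^{\ast}<(1+a^{\ast})VT$ for the running-max index and $S_{t}^{\ast}>(1+a^{\ast})VT$ for the running-min index, where Propositions~3.2 and 3.3 furnish the strictly monotone $\mathrm{Erfc}$--expressions with a positive density; in the complementary regimes the probability is already identically $0$ or $1$, and the assertion is vacuous there. The step I expect to need the most care is precisely this passage from monotonicity to the strict inequality: once the scaling hypothesis is granted, the reduction to $\bar G$ is the elementary independence--conditioning argument and monotonicity is automatic, but the strict claim forces one to identify, for whichever admissible $\Sigma_{T,T_{1}}^{\ast}$ is in force, that $(1+a^{\ast})VT/S_{T}^{\ast}$ is not one of the boundary values already resolved---as the constants $0$ and $1$---by Propositions~3.2 and 3.3.
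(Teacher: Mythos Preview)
Your argument is correct and follows essentially the same route as the paper: both use the scaling hypothesis to factor $\Sigma_{T,T_{1}}^{\ast}(S^{\ast})=S_{T}^{\ast}\Psi$ with $\Psi$ independent of $\mathcal{F}_{T}^{\ast}$, reducing the conditional probability to a survival function $\bar G\big((1+a^{\ast})VT/S_{T}^{\ast}\big)$ whose monotonicity in $S_{T}^{\ast}$ is immediate. Your treatment is in fact more careful than the paper's: you explicitly isolate the passage from monotonicity to the \emph{strict} inequality $\partial_{S_{T}^{\ast}}Q^{\ast}(\cdots)>0$ and correctly note that it requires $(1+a^{\ast})VT/S_{T}^{\ast}$ to lie in the interior of the support of $\Psi$, whereas the paper simply asserts positivity of the partial without qualifying the boundary regimes in which Propositions~3.2 and~3.3 already force the probability to be identically $0$ or $1$.
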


\bigskip

\noindent \textbf{Proof. }The assumption (5.2.1) and (5.2.2) imply that $%
S^{\ast }$ is the product of a scaling factor $S_{T}^{\ast }$ and a
`standardized version' of $S^{\ast }$. This yields a representation for $%
\Sigma _{T,T_{1}}^{\ast }(S^{\ast })$ as a product of two positive factors,
the scaling factor $S_{T}^{\ast }$ and the random variable that results from
the application of the respective construction for $\Sigma _{T,T_{1}}^{\ast
} $ to the standardized version of $S^{\ast }$.

In (5.1.1) and (5.1.2) the partials w.r.t. $S_{T}^{\ast }$ of the
probability factors will thus be positive so long as we have independence of
the normalized versions of $S^{\ast }$ from time $T$-information. \hfill $%
\square $

\subsection{Early disclosure for Geometric Brownian performance indices}

We consider here the other parameter mentioned in Remark~5.2 namely $%
T_{1}\!-\!T$, i.e. the time left to the next mandatory reporting date. At
first sight, one might expect a proposition asserting that the shorter is
this time, the less likely are decisions made for an early disclosure.
However, on reflection, such decisions may well depend on the actual
evolution of the market capitalization of the firm, and market forces may
lead to changes in the size of this capitalization forcing early disclosure
also at dates comparatively close to the mandatory date. Therefore, a
discussion of the effects of $T_{1}\!-\!T$ needs to be incorporated in a
model framework that includes $S^{\ast }$.

Here we adopt a standard Black-Scholes modelling for $S^{\ast }$, and
therefore consider now the process $X^{\ast }$ in Section~5.2 as following
scaled Brownian motion with drift:
\[
X_{u}^{\ast }=X^{\ast }(\mu ^{\ast },\sigma ^{\ast })_{u}=\mu ^{\ast
}u+\sigma ^{\ast }W_{u}^{\ast }\,,\quad u\in \lbrack 0,\infty ),\leqno{\rm
(5.3.1)}
\]%
where $W^{\ast }$ is a $Q^{\ast }$-Brownian motion independent of time-$T$
information started with value $0$ at time $0$, with parameters $\sigma
^{\ast }>0$ and $\mu ^{\ast }=r-\Delta ^{\ast }-(1/2)(\sigma ^{\ast
})^{2}\in ~\mathbb{R}$. Recall the former parameter is a measure of market
volatility, while $r\!-\!\Delta ^{\ast }$ is the excess of the (riskless)
short rate, $r$, over the dividend rate of the firm $\Delta ^{\ast }$
\textit{as seen by the markets}; in the present context this difference
should be viewed as an `appreciation rate' for investments in $F$ (again as
seen by the markets).

The four effects on $V^{\ast }$ to consider now are those induced by changes
in $T_{1}\!-\!T$ and also in $\sigma ^{\ast }$, $r\!-\!\Delta ^{\ast }$, and
$r$. These four will depend on which of good-news or bad-news situations
occurs; they enter via the market proxies, and hence even a qualitative
picture will depend on the concrete form of $\Sigma _{T,T_{1}}^{\ast }$. We
focus on modelling $\Sigma _{T,T_{1}}^{\ast }$ as the running minimum or
maximum of $S^{\ast },$ as in equations (3.1.5a) and (3.1.5c), specifically
in the good-news situations, so that by equation (5.1.1) we must consider
the inequalities
\[
V_{T}\geq V_{\bullet }^{\ast }\quad \bullet \in \{\max ,\min \},%
\leqno{(5.3.2)}
\]%
with
\[
V_{\text{max}}^{\ast }:=(1+a)E_{T}^{\ast }\exp (-r(T_{1}-T))Q_{\text{max}%
}^{\ast },\leqno{\rm (5.3.3a)}
\]%
\[
Q_{\text{max}}^{\ast }:=Q^{\ast }(\max_{u\in \lbrack 0,T_{1}-T]}\{X^{\ast
}(\mu ^{\ast },\sigma ^{\ast })\}>A_{T}^{\ast }),\leqno{\rm (5.3.3b)}
\]%
\[
V_{\text{min}}^{\ast }:=(1+a)E_{T}^{\ast }\exp (-r(T_{1}-T))Q_{\text{min}%
}^{\ast },\leqno{\rm (5.3.4a)}
\]%
\[
Q_{\text{min}}^{\ast }:=Q^{\ast }(\min_{u\in \lbrack 0,T_{1}-T]}\{X^{\ast
}(\mu ^{\ast },\sigma ^{\ast })\}>A_{T}^{\ast }),\leqno{\rm (5.3.4b)}
\]%
where we set%
\[
A_{T}^{\ast }=\log ((1+a^{\ast })VT/S_{T}^{\ast }).\leqno{\rm (5.3.5)}
\]

\subsubsection{Explicit results for geometric Brownian performance indices}

We give a paradigm discussion of the effects of $T_{1}\!-\!T$ on the
likelihood of disclosure decisions, when for the good-news case of (5.3.2)
the running maximum is used in the construction of market proxies according
to (5.3.2a,b). These effects turn out to depend on the sign of the mark-up
parameter $A_{T}^{\ast }$ as follows.

\bigskip

\begin{theorem}
\quad \textit{In the framework of Section~5.3, assume a situation of time-$T$
non-disclosure of good news. Using the running maximum of $S^{\ast }$ in the
construction of the market proxy $V^{\ast }$, the following two assertions
are equivalent.}

\begin{itemize}
\item[\textrm{(1)}] \textit{\ Early disclosure in $[T,T_{1}]$ is the more
likely the nearer is $T$ to $T_{1}$. }

\item[\textrm{(2)}] \textit{\ $\mathrm{sign}(\partial _{T_{1}-T}V_{\max
}^{\ast })<0$.}
\end{itemize}

\textit{\noindent Here the partial derivative in $(2)$ depends on the sign
of $A_{T}$; for $A_{T}^{\ast }\leq 0$ this is
\[
\partial _{T_{1}-T}V_{\max }^{\ast }=-rV_{\max }^{\ast },
\]%
while for $A_{T}^{\ast }\geq 0$ this is}%
\begin{eqnarray*}
\partial _{T_{1}-T}V_{\max }^{\ast } &=&-rV_{\max }^{\ast }+(1+a)E_{T}^{\ast
}\exp (-r(T_{1}\!-\!T))\,\frac{A_{T}^{\ast }\text{\textit{\ }}\exp (-\eta
^{2})}{\sigma ^{\ast }\sqrt{{2\pi (T_{1}-T)^{3}\,}}}, \\
\text{where }\eta &=&(A_{T}^{\ast }-(T_{1}\!-\!T)\mu ^{\ast })/\sigma ^{\ast
}\sqrt{{2(T_{1}-T)\,}}.
\end{eqnarray*}
\end{theorem}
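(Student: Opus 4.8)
\noindent The plan is to treat $\rho :=T_{1}-T$ as the single free variable, to reduce the comparative-statics claim to the determination of $\mathrm{sign}(\partial _{\rho }V_{\max }^{\ast })$, and then to evaluate that derivative separately according to the sign of $A_{T}^{\ast }$, invoking for $A_{T}^{\ast }\geq 0$ the explicit running-maximum law of Proposition~3.2 and the error-function calculus of the Appendix (subsection~A2). For the reduction: by (5.1.1), specialised to the running-maximum proxy as in (5.3.2)--(5.3.3b), a time-$T$ good-news disclosure occurs precisely when $V_{T}\geq V_{\max }^{\ast }$; holding fixed the time-$T$ data $E_{T}^{\ast }$, $S_{T}^{\ast }$ (equivalently $A_{T}^{\ast }=\log ((1\!+\!a^{\ast })VT/S_{T}^{\ast })$, by (5.3.5)), the mark-ups $a,a^{\ast }$, the short rate $r$, and the market parameters $\mu ^{\ast },\sigma ^{\ast }$, the $\mathcal{F}_{T}^{\ast }$-conditional probability of this event is a monotone function of the single number $V_{\max }^{\ast }$, so assertion~(1) --- early disclosure in $[T,T_{1}]$ becoming the more likely as $T\rightarrow T_{1}$ --- is equivalent, through that monotone dependence, to the definite sign for $\partial _{\rho }V_{\max }^{\ast }$ asserted in~(2). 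Thus it suffices to compute $\partial _{\rho }V_{\max }^{\ast }$. From (5.3.3a), $V_{\max }^{\ast }=(1\!+\!a)E_{T}^{\ast }e^{-r\rho }Q_{\max }^{\ast }$, so the product rule gives at once
\[
\partial _{\rho }V_{\max }^{\ast }=-r\,V_{\max }^{\ast }+(1\!+\!a)E_{T}^{\ast }e^{-r\rho }\,\partial _{\rho }Q_{\max }^{\ast },
\]
and the whole problem reduces to $\partial _{\rho }Q_{\max }^{\ast }$, where by (5.3.3b) $Q_{\max }^{\ast }=Q^{\ast }(\max_{u\in [0,\rho ]}X^{\ast }(\mu ^{\ast },\sigma ^{\ast })_{u}>A_{T}^{\ast })$.

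In the case $A_{T}^{\ast }\leq 0$ I would observe that, since $X^{\ast }(\mu ^{\ast },\sigma ^{\ast })_{0}=0$, the running maximum over $[0,\rho ]$ almost surely exceeds $A_{T}^{\ast }$ (the boundary value $A_{T}^{\ast }=0$ because a Brownian path leaves $(-\infty ,0]$ immediately); hence $Q_{\max }^{\ast }\equiv 1$, $\partial _{\rho }Q_{\max }^{\ast }=0$, and $\partial _{\rho }V_{\max }^{\ast }=-r\,V_{\max }^{\ast }$, which is the first displayed formula. In the case $A_{T}^{\ast }\geq 0$ the reflection principle for Brownian motion with drift --- equivalently formula (3.2.4c) of Proposition~3.2, read with $T-t$ replaced by $\rho $ and $A_{t}^{\ast }$ by $A_{T}^{\ast }$ --- gives
\[
Q_{\max }^{\ast }=\frac{1}{2}\mathrm{Erfc}(g_{-})+\frac{1}{2}e^{2\mu ^{\ast }A_{T}^{\ast }/(\sigma ^{\ast })^{2}}\mathrm{Erfc}(g_{+}),\qquad g_{\pm }:=\frac{A_{T}^{\ast }\pm \rho \mu ^{\ast }}{\sigma ^{\ast }\sqrt{2\rho }}.
\]
Since $A_{T}^{\ast }$ carries no $\rho $-dependence, the prefactor $e^{2\mu ^{\ast }A_{T}^{\ast }/(\sigma ^{\ast })^{2}}$ is constant under $\partial _{\rho }$; differentiating, with $\frac{d}{dx}\mathrm{Erfc}(x)=-\frac{2}{\sqrt{\pi }}e^{-x^{2}}$ from the Appendix, yields
\[
\partial _{\rho }Q_{\max }^{\ast }=-\frac{1}{\sqrt{\pi }}e^{-g_{-}^{2}}g_{-}^{\prime }-\frac{1}{\sqrt{\pi }}e^{\,2\mu ^{\ast }A_{T}^{\ast }/(\sigma ^{\ast })^{2}-g_{+}^{2}}g_{+}^{\prime }.
\]
Two elementary identities finish the computation: $g_{+}^{2}-g_{-}^{2}=2\mu ^{\ast }A_{T}^{\ast }/(\sigma ^{\ast })^{2}$, which collapses the second exponential to $e^{-g_{-}^{2}}$, and $g_{-}^{\prime }+g_{+}^{\prime }=-A_{T}^{\ast }/(\sigma ^{\ast }\sqrt{2\rho ^{3}})$; together they give $\partial _{\rho }Q_{\max }^{\ast }=A_{T}^{\ast }e^{-\eta ^{2}}/(\sigma ^{\ast }\sqrt{2\pi \rho ^{3}})$ with $\eta :=g_{-}=(A_{T}^{\ast }-\rho \mu ^{\ast })/(\sigma ^{\ast }\sqrt{2\rho })$. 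Substituting into the product-rule identity above produces the stated formula for $\partial _{T_{1}-T}V_{\max }^{\ast }$, the two regimes agreeing at $A_{T}^{\ast }=0$ where the correction term vanishes.

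The one delicate point is the $A_{T}^{\ast }\geq 0$ computation: one must keep straight which quantities depend on $\rho $ --- in particular that $A_{T}^{\ast }$, hence the exponential prefactor in $Q_{\max }^{\ast }$, is $\rho $-free, so no spurious $\mathrm{Erfc}$-term is produced --- and one must verify the two algebraic identities that make the two error-function contributions coalesce into a single closed form. The remaining ingredients (the product rule, the trivial $A_{T}^{\ast }\leq 0$ case, and the monotonicity argument reducing (1) to a sign of $\partial _{\rho }V_{\max }^{\ast }$) are routine.
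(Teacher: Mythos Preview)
Your proposal is correct and follows essentially the same route as the paper: reduce the equivalence (1)$\Leftrightarrow$(2) to the monotone dependence of the good-news disclosure event $\{V_T\geq V_{\max}^{\ast}\}$ on the threshold $V_{\max}^{\ast}$, then compute $\partial_{T_1-T}V_{\max}^{\ast}$ from (5.3.3a,b) and the explicit running-maximum law (Appendix (A7a,b) / Proposition~3.2). The paper's own argument is only a sketch --- it says the result ``reduces to straightforward partial differentiation of explicitly given functions, albeit of some complexity'' --- whereas you actually carry out that differentiation, verifying the two algebraic identities ($g_+^2-g_-^2=2\mu^{\ast}A_T^{\ast}/(\sigma^{\ast})^2$ and $g_-'+g_+'=-A_T^{\ast}/(\sigma^{\ast}\sqrt{2\rho^3})$) that collapse the two $\mathrm{Erfc}$-contributions into the single closed form stated.
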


To indicate the typical line of reasoning for results like this, start from
(5.3.2), observing that (in these good-news situations) disclosure decisions
at some fixed point in time $T$ are the more likely the larger $V_{T}^{\ast
} $ is. The effects of some parameter on the likelihood of early disclosure
thus translate into the determination of the corresponding partial
derivative of $V_{T}^{\ast }$, and so a determination of their qualitative
effect reduces to a determination of the sign of these partials. Early
disclosure thus becomes more likely the larger the relevant parameter,
provided the corresponding partial of $V_{T}^{\ast }$ is positive, and vice
versa. The point of our choice of a geometric Brownian framework is that
explicit formulas for the probability factors $Q_{\bullet }^{\ast }$ are
available as standard results in Brownian motion; these are reviewed in
Appendix A below, with equations (A7a,b) pertinent for the present case of
running-maximum performance parameters. Establishing comparatice statics
results therefore reduces to straightforward partial differentiation of
explicitly given functions, albeit of some complexity. Theorem~5.3 collects
the results when the relevant parameter there is the time left to the next
mandatory date.

\bigskip

\begin{remark}
Proceeding along the same lines in the same situation, one obtains results
similar to those of Theorem~5.3 concerning the effects of the volatility $%
\sigma ^{\ast }$, whereas the effects of $r\!-\!\Delta ^{\ast }$ and $r$ are
unequivocally unidirectional (with the signs of the pertinent partials being
equal to minus that of $E_{T}^{\ast }$). Provided $E_{T}^{\ast }>0$, early
disclosure within $(T,T_{1})$ is the more likely the smaller are $%
r\!-\!\Delta ^{\ast }$ and $r$.
\end{remark}

\subsubsection{Explicit results for running-min}

Here we note only that if the market proxies are instead constructed using
the running minimum of $S^{\ast }$ analogues of Theorem~5.3 and Remark 5.5
again hold and preserve all the conclusions above except for a sign reversal
in $A_{T}^{\ast }$. This shows how derivation of the effects of $T_{1}\!-\!T$
on early disclosure requires the specifics of a given model.

\section{Managerial Implications}

This paper's approach to asset pricing allows the development of a richer
appreciation of how voluntary disclosure by firms can affect firm asset
valuation in equilibrium. Existing research has typically modelled voluntary
disclosure as the choice by firms to make additional voluntary (content)
disclosures to the market at fixed time points. As such this literature does
not consider the possibility that firms may choose not only \textit{what }to
disclose voluntarily but also \textit{when} to disclose. Thus voluntary
disclosure has at least two dimensions:\textit{\ content} and \textit{timing}%
. As existing models typically do not consider the latter dimension, they
are not truly dynamic, and hence do not provide the necessary building
blocks to develop a realistic empirical model of (`two dimensional')
voluntary disclosure. Here we have explicitly modelled the joint
content-timing interaction, so enabling more realistic formal modelling of
problems faced by managers of firms: when private news is uncertain, how
good does that private news have to be before it is in the interests of the
firm to issue a voluntary disclosure. The other side of this coin is what
materiality standard needs to be followed in managing the voluntary
disclosure process. The comparative statics derived in the preceding section
permit an understanding of how changes in parameter values may explain
differences in equilibrium behaviour between firms -- some voluntarily
releasing additional information early, others not. This meets the challenge
of modelling equilibrium asset-pricing with endogenously determined
voluntary disclosures, wherein both the content and the timing of
disclosures are rationally chosen, making delay or early release of
information in capital markets an equilibrium outcome.

\section{Complements}

We close with some observations about the potential of the approach above
especially with regard to variations on the themes presented and
generalizations away from the Brownian framework followed above.

\subsection{Mechanisms}

Implicit in our development of a markets-based general modelling framework
was the need to pick apart the `who does what and how' into `building
bricks', and with these to build a variety of models. We implicitly
identified five such bricks, which in fact are best considered as
mechanisms, to borrow a phrase from economic theory. These are made explicit
here so as to stress both the sensitivity of a model to its assumptions and
its adaptability to alternative contexts.

\bigskip

\noindent \textit{Mechanism (i). Evolution rules}.\quad The perspective
adopted above is rather like that of a scientist designing experiments and
subsequently observing outcomes and evolution. Ingredients thus include
design dynamics, start time and observation times. Thus mechanisms~(i)
amounts to formal rules for encoding these three aspects. Real-life features
mapped via such `experiments' include informational interplay between
economic agents and firm-to-market communications. See the summary in
Section 7.5 for an explicit illustration of how this can be further
developed.

\bigskip

\noindent \textit{Mechanism (ii). Decision-rule strategies}.\quad The task
here is to provide rules for triggering `events' (typically, public
disclosure of privileged information), and the idea is that these be the
consequence of some `rule', i.e. functional relation, applied to some
observation variables. At a technical level, this mechanism thus amounts to
the selection of functional relations subject to the specification of
observation variables. The mechanisms we adopted, starting from Section~3,
are motivated by the provision of approximations to \textit{%
equilibrium-induced decision rules} as derived in [GieO]. There they
correspond to `value-enhancement' disclosures when observations are
`sufficiently high'. Whilst outside the scope of the present paper, the
argument there may be dualized to correspond to equilibrium-induced
`value-erosion' \textit{alerts} when observations are `sufficiently low',
with a resulting notion of \textit{endogeneous} `materiality thresholds'.
Such an understanding leads to the following

\bigskip

\begin{conjecture}
\quad \textit{First-order approximation of equilibrium decision-rules based
on the notion of materiality yields decision-rules using functions taking
the form }%
\[
\mathit{h_{\varepsilon ,a}(t,x,y)=(x\!-\!(1\!+\!a_{t})y)\varepsilon _{t}}
\]
\textit{for some families of signs $\varepsilon =(\varepsilon_{t})_{t\in
\lbrack T_{0},T_{1}]}$ and mark-ups $a=(a_{t})_{t\in \lbrack T_{0},T_{1}]}$,
and vice versa.}
\end{conjecture}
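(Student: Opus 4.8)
\medskip

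\noindent\textit{Towards a proof.}\quad The plan is to split the asserted equivalence into its two directions, and to reduce the substantive ``forward'' direction to a statement about how the equilibrium indifference level depends on market sentiment. Recall from \S 1.1 and \cite{GieO} that, at a non-disclosure time $t$, the equilibrium decision rule is of \emph{indifference form}: good news is disclosed exactly when the observed state $V_t$ exceeds the indifference level $L_t$ of that time, where $L_t$ solves (in the limit $T\to t$ of (1.1.1a,b)) a first-order ordinary differential equation whose coefficients are the instantaneous variance of the firm's risk-neutral measure and the intensity of the observation clock; dually, the value-erosion strand of Mechanism~(ii) discloses bad news exactly when $V_t$ falls below a lower threshold $\ell_t$ supplied by the mirror-image indifference equations. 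Writing $x$ for $V_t$ and $y$ for the prevailing market sentiment $V_t^{\ast}$, the rule reads ``disclose iff $x-L_t\ge 0$'' in the good-news regime and ``disclose iff $x-\ell_t\le 0$'' in the bad-news regime. It therefore suffices to show that, to first order in $y$, the thresholds $L_t$ and $\ell_t$ are \emph{positive multiples} of $y$: once $L_t=(1+a_t)y$ (respectively $\ell_t=(1+a_t)y$) with $a_t>-1$, the two regimes collapse to $h_{+1,a_t}(t,x,y)\ge 0$ and $h_{-1,a_t}(t,x,y)\ge 0$, exactly the claimed family indexed by $\varepsilon_t=\pm 1$ and mark-up $a_t$.

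For the forward direction I would proceed as follows. Under the multiplicative scalability of \S 1.1 ($V_{T+u}=V_T\tilde V_u$ with $V_T$ independent of $\tilde V_u$), equations (1.1.1a,b) are homogeneous of degree one in the reference value $R_t$ (the latest disclosed state carried forward by the $Q_{VT}$-martingale factor), so the indifference level inherits that homogeneity: $L_t=(1+a_t)R_t$ with $a_t$ a deterministic function of $t$ alone, and $a_t>-1$ since $L_t$ is a valuation, hence strictly positive. The equilibrium identity $Q^{\ast}|\,\mathcal{F}_t^{\ast}=Q_{VT}|\,\mathcal{F}_t^{\ast}$ ties $R_t$ to market sentiment $y=V_t^{\ast}$, giving the affine form \emph{exactly}. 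Absent scalability, write $L_t=\phi(t,y)$ and Taylor-expand $\phi$ in $y$ about the origin; the constant term vanishes --- silence built on a market belief of zero can only return a valuation of zero (equivalently, by the homogeneity just noted in the scalable sub-case) --- leaving the linear term $\partial_2\phi(t,0)\,y$, which is the sought affine form with $1+a_t:=\partial_2\phi(t,0)>0$. The mark-up $a_t$ so obtained is precisely the endogenous, relative \emph{materiality threshold} of Mechanism~(ii). The lower threshold $\ell_t$ is handled by the identical computation on the dual equations and delivers the signature $\varepsilon_t=-1$.

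For the converse, given prescribed families $(\varepsilon_t)_{t\in[T_0,T_1]}$ and $(a_t)_{t\in[T_0,T_1]}$, the idea is to \emph{construct} a scalable \cite{GieO}-type equilibrium realising $h_{\varepsilon,a}$: fix a terminal horizon and an observation-clock intensity, and then solve the indifference ODE backwards for the instantaneous-variance schedule of $Q$ that forces its solution to satisfy $L_t/R_t=1+a_t$ along the given profile $t\mapsto a_t$ --- an inversion of the ODE, solvable under mild monotonicity and regularity of $t\mapsto a_t$. The signature $\varepsilon_t$ remains free, recording only whether the resulting alert is of value-enhancement or value-erosion type. Thus every function of the stated form arises (exactly, hence a fortiori to first order) from a materiality-based equilibrium decision rule, which gives the ``vice versa''.

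The main obstacle is definitional rather than computational. The genuine work is to fix a precise meaning for ``equilibrium decision-rule based on the notion of materiality'' --- the \cite{GieO} apparatus \emph{together with} its value-erosion dual, including existence and uniqueness of the lower threshold $\ell_t$ and its homogeneity, since that dual is only sketched in \cite{GieO} --- and for ``first-order approximation'', namely the linearisation above with the vanishing of the constant term justified by homogeneity (equivalently, by the no-value-without-market-support principle). A secondary technical point is uniformity in $t$: the pointwise linearisations must assemble into genuine measurable families $(\varepsilon_t)$ and $(a_t)$ on $[T_0,T_1]$, which will need a regularity hypothesis on the model data. Granted these, the algebra reducing the indifference rule to $(x-(1+a_t)y)\varepsilon_t$ is routine.
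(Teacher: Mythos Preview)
The statement you are attempting to prove is labelled a \emph{Conjecture} in the paper (Section~7.1, Mechanism~(ii)), and the paper provides no proof of it whatsoever. The authors introduce it with the remark that the \cite{GieO} equilibrium argument ``may be dualized'' to produce endogenous materiality thresholds, but explicitly leave the matter ``outside the scope of the present paper''. There is therefore nothing in the paper to compare your proposal against.

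That said, your sketch is a sensible programme for attacking the conjecture, and you correctly identify that the real difficulty is definitional: neither ``equilibrium decision-rule based on the notion of materiality'' nor ``first-order approximation'' is made precise in the paper, so any proof must begin by fixing those meanings. Two technical remarks on your outline. First, Taylor-expanding $\phi(t,y)$ about $y=0$ is not the natural linearisation, since $y=V_t^{\ast}$ is a strictly positive valuation and the origin is a degenerate boundary; the homogeneity argument you give under scalability is the right mechanism, and ``first order'' should presumably mean first order in a perturbation away from the scalable benchmark, not in $y$ itself. Second, your converse --- inverting the indifference ODE to recover a variance schedule matching a prescribed $t\mapsto a_t$ --- is plausible in spirit but would require checking that the inverse problem is well-posed and that the recovered instantaneous variance is nonnegative, neither of which is automatic for an arbitrary mark-up profile.
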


\bigskip

\noindent \textit{Mechanism (iii). Observation processes $V$}.\quad A
further fundamental notion is informational assymetry, the task being to
construct a `variable' (or perhaps a vector) with two properties. Firstly,
it is capable of observation over time and is observed over time by the
informationally privileged agents of whatever model is to be constructed
(denoted by the symbol $F$, typifing firms); secondly, the variable is at
best partially observable by the remaining agents (denoted by the symbol $M$%
).

As to observation variables, we focus on a \textit{portfolio view}.
Continuing to think of $F$ as a firm for a moment, $F$ will not in general
observe just a single source of information to set a target, but a portfolio
of these, say $(X_{1},\ldots ,X_{n},\ldots ,X_{N})$. Formally, the \textit{%
chosen observation process} $X$ will be a function of the $X_{n}$; simple,
but typical, functions are linear or multiplicative forms in the $X_{n}$ as
given respectively by
\[
X=\sum\nolimits_{n=1}^{N}a_{n}X_{n},\text{ or }X=\prod%
\nolimits_{n=1}^{N}X_{n}^{a_{n}},
\]%
with suitable real weights $a_{n}\in \mathbb{R}$.

\bigskip

\textit{Examples of two factor portfolio observation variables}.\quad In the
paper, $V$ was interpreted as representing the value of the firm $F$, i.e. a
process internally observed by $F$. It is natural to complement it by a
process that encodes the external view of the firm's value, such as provided
by the the firm's market capitalization, $S^{\ast }$. Specializing to
portfolios of additive type, the associated observation variable may take
the form
\[
X=aV+bS^{\ast },
\]%
for some $a,b\in \mathbb{R}$. General structure of $X$ apart, the modelling
of $V$ and $S^{\ast }$ is far from straightforward, and Sections~7.2 to 7.4
below offer an amendment to the simplified treatment given in the main body
of the paper.

\bigskip

\noindent \textit{Mechanism (iv). Observation process proxies $V^{\ast }$
from state observer systems}.\quad The task here is to enable specifically
the `informationally under-privileged' agents in the models to approximate $V
$. To paraphrase a key idea in the paper: here $S^{\ast }$ is seen as a
public proxy for $V$ in creating an estimate $V_{t}^{\ast }$ of $V_{t}$; for
tractability we made specific approximation assumptions.

In so doing, we borrowed an idea from the control theory of an engineering
plant, where one way to deal with imperfect information about the plant is
to build a laboratory version (a model) of the plant with accessible
full-information of its state at any time (known as a `state observer'
system [Rus, Ch. 3], [Son, Ch. 7] -- in reality a `state estimator');
state-correcting signals are sent to this model, using plant-based,
imperfect, or partial observations, with which to guide the `observer
system' (model) into greater agreement with the plant.

Unlike in the engineering context, inclusion into a market-based model of an
observer-style system implies changes to the strategic behaviour of the firm
in its decisions to hide certain observations of its state. Indeed, here
\textit{each} side (each of the agents, $F$ or $M$) enriches its algorithmic
opportunities. In this context, our version of an `observer system' responds
to strategic behaviour, so is far richer. To emphasize this difference when,
for example, $F$ was a firm, the observer-style system was termed a \textit{%
proxy-firm}.

\bigskip

\noindent \textit{Mechanism (v). Target value processes $VT$}.\quad Under
this heading, the task is to provide techniques for forecasting values of
observation variables (and, significantly, of their proxies), bearing in
mind that such items are contingent on market developments as well as on
restrictions arising from (production) technologies. In Section~4 we marry
accounting analysis with the analysis of Bermudean options: mechanism~(v)
involved observation levels $L_{t}$ for $V_{t}$ qua strike prices at which
the respective agent's decision rule is indifferent between disclosing or
suppressing private observation of $V_{t}$, were $V_{t}$ to take the value $%
L_{t}$ (as noted in the introduction). This construction borrows from the
stylized model of [GieO], where these levels describe the market's current
view of the value during periods of silence and so provide the basis of
current guidance on its earnings target.

\subsection{Modelling firm-value processes: uncertainty structure of profits}

As to the observation mechanism, one may `drill down' to the basic structure
of profits and address the uncertainty effects created by reporting lags.
The starting point is a formalization of accounting practice: $V_{u}$ the
firm $F$'s time-$u$ value is the accrual of an instantaneous variable $\pi
_{w}$ over the period $[T_{0},u]$ added to an initially given value $VC$ of
time $T_{0}$, so that
\[
V_{u}=VC+\int_{[T_{0},u]}\pi _{w}\,\mathrm{d}w,\quad u\in \lbrack
T_{0},T_{1}],\leqno{(7.2.1)}
\]%
implicitly assuming $w\mapsto \pi _{w}$ to be summable over $[T_{0},T_{1}]$.

\bigskip

The simplest interpretation of $V_{u}$ is incrementing $VC$ by the firm's
\textit{actual} profit flow rate $\pi _{w}$ -- as it actually arises at each
time moment $w$ between time-$T_{0}$ and time-$u$. Alternatively, to allow
for the possibility of delays in reporting profits (due, say, to reporting
delays of costs, as below), we can re-interpret this as the \textit{%
recognized profit} flow rate -- namely, the value posted in some official
ledger.

\bigskip

To introduce \textit{reporting lags} into the model, fix $\Lambda \geq 0$
and then at each time $u$, assume the flow $\pi _{w}$ is certain only for
`distant' times $w$, namely times earlier than $u\!-\!\Lambda $, but for
times $w$ nearer to $u$, i.e. in $(u\!-\!\Lambda ,u]$, $\pi _{w}$ is
uncertain. A further refinement then occurs in the decomposition (7.2.1)
created by a \textit{deterministic\/} part $\Delta ^{\mathrm{ns}}{V}_{u}$ ,
certain at time-$u$ (with `ns' for non-stochastic), and a part $\Delta ^{%
\mathrm{s}}{V}_{u}$ \textit{uncertain\/} at time-$u$:
\[
V_{u}=VC+\Delta ^{\text{ns}}V_{u}+\Delta ^{s}V_{u},\qquad u\in \lbrack
T_{0},T_{1}],\leqno{\rm(7.2.2a)}
\]%
where
\[
\Delta ^{\text{ns}}V_{u}=\int_{[T_{0},\max \{T_{0},u-\Lambda \}]}\pi _{w}\,%
\mathrm{d}w,\leqno{\rm(7.2.2b)}
\]%
\[
\Delta ^{\text{s}}V_{u}=\int_{[\max \{T_{0},u-\Lambda \},u]}\pi _{w}\,%
\mathrm{d}w.\,\leqno{\rm(7.2.2c)}
\]%
We view ${V}_{u}^{\mathrm{ns}}=VC\!+\!\Delta ^{\mathrm{ns}}{V}_{u}$ as an
\textit{accounting equality}, namely, as data held, or stored, by the firm $%
F $, and $\Delta ^{\mathrm{s}}{V}_{u}$ as a variable that needs to be
modelled.

Two obvious questions arise: first, how does (a manager) $F$ respond to such
operational uncertainty. For example, will there be a period in which $F$ is
waiting for the time-$T$ accounting information to be corroborated and to be
verified as reliable (up to a level of security deemed appropriate for the
decision-making), and how does $F$ then respond to the evolution of market
sentiment during such periods of waiting? Second, is there a correlation
between market sentiment $V^{\ast }$ and the degree of operational
effectiveness of the firm's accounting department?

\subsection{Modelling firm-value processes: Cobb-Douglas examples}

As a second complement to our discussion of Mechanisms~(iii), we provide
examples for modelling firm-value observation processes $V$ concretely.

\subsubsection{(i) \textit{Deterministic}}

Here the construction of $V$ needs to be linked to the standard functional
forms preferred by the theory of the firm in Economics and Econometrics. We
consider Cobb-Douglas technologies, and indicate how to model profits
derived from a Cobb-Douglas production function ([Var, Ch. 1], [Rom, Ch.~2])
corresponding to a \textit{single} output from \textit{two} input factors
(such as capital and labour) with respective parameters $a,b\geq 0$
satisfying $a\!+\!b<1.$ This yields profits as a function of input prices $%
w=(w_{1},w_{2})$ and output selling-price $p$ in the form
\[
\leqno{(7.3.1)}\qquad \pi ^{CD}(p,w)=
\]%
\[
=p^{1/(1-(a+b))}\left\{ \Big({\frac{a\!+\!b}{\kappa \cdot c(w)}}\Big)%
^{(a+b)/(1-(a+b))}-\kappa \cdot c(w)\Big({\frac{a\!+\!b}{\kappa \cdot c(w)}}%
\Big)^{1/(1-(a+b))}\right\} ,
\]%
for a cost function $c(w)=(w_{1}^{a}w_{2}^{b})^{1/(a+b)}$, where
\[
\kappa :=((a/b)^{a/(a+b)}+(a/b)^{-(a/(a+b)})/A^{1/(a+b)},\text{ with }%
A=(1\!-\!a)^{a-1}/a^{a}.
\]

\subsubsection{(ii) \textit{Stochastic Cobb-Douglas profits}}

To take into account uncertainties in the profit function, assume that
uncertainty in the output and input price is given by positive stochastic
processes on some stochastic basis, say $\mathcal{X}(Q)=(\Omega ,{\mathcal{F}%
},\mathbf{F}=({\mathcal{F}}_{u})_{u\in \lbrack T_{0},T_{1}]},Q)$. As regards
output, passing to logarithmic prices and so to an exponential price process
$\exp Y$, (7.2.1) yields profits of the form
\[
\pi _{w}=\alpha \exp (Y_{w})^{\beta }=\exp (\alpha (\log (\alpha )+\beta
Y_{w})\,,\quad w\in \lbrack T_{0},T_{1}]\,,\leqno{(7.3.2)}
\]%
for some fixed $\alpha >0$, $\beta \in \mathbb{R}$, and some fixed
stochastic process $Y$ on $\mathcal{X}(Q)$. In turn this gives the uncertain
part of the time-$u$ value of $V$ the representation:
\[
\Delta ^{\mathrm{s}}{V}_{u}=\alpha \int_{\lbrack \max \{T_{0},u-\Lambda
\},u]}\exp (\beta Y_{w})\,\mathrm{d}w,\quad u\in \lbrack T_{0},T_{1}]\,.%
\leqno{(7.3.3)}
\]

Treating input (factor) prices $w$ in similar vein preserves this general
form for the uncertain parts of $V$ in (7.2.2c). In any of these
representations, a notable choice for $Y$ is Brownian motion on $\mathcal{X}%
(Q)$, and this provides an explicit illustration of how the modelling above
turns $V_{u}$ itself into a random variable, given the time-$u$ information
(see Section~7.4 for scalable processes, which we suggest as candidate
modelling mechanisms).

As mentioned, accruals in (7.2.1) can be modelled in (at least) two
conceptually different ways, depending on whether instantaneous profits $\pi
_{s}$ or their accumulated value is taken as a primary variable. The matter
of choice is not just a conceptual but also a practical one, even assuming
the classical Cobb-Douglas two-factor production technology above.
Neoclassical economic theory asserts (cf. \S 3.3, [Rom] and [Var]) the
profits of the firm will then be a function of factor prices $w_{1}$, $w_{2}$
and of the commodity price $p$ taking the form
\[
\leqno{\rm(7.3.4a)}\qquad \pi
_{CD}(p,w_{1},w_{2})=w_{1}^{a/(a+b-1)}w_{2}^{b/(a+b-1)}p^{(a+b-1)}\quad
\cdot
\]%
\[
\qquad \cdot \quad \left\{ \Big({\frac{a\!+\!b}{\kappa }}\Big)%
^{(a+b)/(1-(a+b))}-\kappa \Big({\frac{a\!+\!b}{\kappa }}\Big)%
^{1/(1-(a+b))}\right\} ,
\]%
with constants $a$, $b>0$ such that $a\!+\!b<1$ for
\[
\kappa
:=((a/b)^{b/(a+b)}+(a/b)^{-a/(a+b)})/((1\!-\!a)^{a-1}/a^{a})^{1/(a+b)}\,.%
\leqno{\rm(7.3.4b)}
\]%
Assume that non-deterministic profits are the result of fluctuations in any
of these prices, and, for simplicity, staying within the Brownian framework,
assume the fluctuations follow geometric Brownian motion. The resulting
dynamic for $\pi _{\mathrm{CD}}$ is of the form
\[
\pi _{\mathrm{CD},T+u}=\pi _{\mathrm{CD},T}\exp (\mu _{\pi }u+\sigma _{\pi
}W_{\pi ,u})\,,\quad u\in \lbrack 0,\infty )\,,\leqno{(7.3.5)}
\]%
with driver a Brownian motion $W_{\pi }$ independent of time-$T$ information
$\mathcal{F}_{FF,T}$ and with real constants $\sigma _{\pi }>0$ and $\mu
_{\pi }$. Return now to the choice of explicit modelling variants; according
to the choice of accumulated profits or instantaneous profits, one has
respectively
\[
\pi _{s}=\pi _{CD,T+s},\quad s\in \lbrack 0,\infty )\,,\leqno{\rm(7.3.6a)}
\]%
\[
\int_{\lbrack T,T+s]}\pi _{w}\text{ }\mathrm{d}w=\pi _{CD,T+s},\quad s\in
\lbrack 0,\infty )\,.\leqno{\rm(7.3.6b)}
\]
This last requires for $V_{T+u}$ the integral of geometric Brownian motion,
not covered by the context of Section~4. For the first, the results of
Section~4.2 do apply, and provide the forecast target $VT$.

\subsection{Scalable processes}

As a third complement to Section~7.1, we suggest the use \textit{scalable
processes} for modelling with mechanisms. These processes will be patterned
after the exponentials of strong Markov processes $S^{\ast }$, which satisfy
two equations. Firstly, with $T_{0}$ fixed, for any time $u\geq 0$,
\[
S_{T_{0}+u}^{\ast }=S_{T_{0}}^{\ast }\exp (X_{u}^{\ast }),\quad \hbox{where}%
\quad S_{T_{0}}=VC\,,\ \leqno{(7.4.1)}
\]%
where $(X_{u}^{\ast })_{u\geq 0}$ is independent of time-$T_{0}$ information
$\mathcal{F}_{T_{0}}^{\ast }$. Secondly, for arbitrary real $T$ in $%
(T_{0},T_{1}]$, the representation
\[
S_{T+u}^{\ast }=S_{T}^{\ast }\exp (X_{u}^{\ast })\,,\quad u\in \lbrack
0,\infty ),\leqno{(7.4.2)}
\]%
where, by abuse of language, $(X_{u}^{\ast })_{u\geq 0}$ (or a suitable
version of the process in (7.4.1) denoted by the same same symbol) is a
process independent of time-$T$ information $\mathcal{F}_{T}^{\ast }$. We
now relax the second condition and define processes $S^{\ast }$ to be
\textit{scalable\/} if they are RCLL and satisfy (7.4.1) and (7.4.2), except
that now (7.4.2) need not necessarily hold for \textit{all} $T$ in $%
(T_{0},T_{1}]$ and instead is to hold necessarily for all $T$ in some
prespecified finite subset $\mathcal{T}$ of $(T_{0},T_{1}]$. Here we
primarily think of $\mathcal{T}$ as containing the endpoints of benchmark
observation schemes along the lines of equation (4.1.1). Extending this
notion of scalable process to allow the sets $\mathcal{T}$ to have at most
countably many stopping times should not, however, pose problems.

\subsection{Modelling with mechanisms: a summary vista}

A characteristic feature of the mechanisms identified in Section 7.1 is that
they identify the economic agents solely in terms of how they act. In
respectively Sections 4 and 3, as it happens, the agents $F$ posited by the
mechanisms in disclosure situations are indeed interpreted as acting like
the manager of the firm, and agents $M$ as acting like representative market
participants. For the wider guidance theme, however, specific \textit{market}
participants will also `act out' the role of agent $F$ within some of the
mechanisms. An outline follows.

To tell our guidance story we need to single out a distinguished group of
people from among the market participants $M$, whom we shall call \textit{%
analysts}. The typical representative member of this group being denoted by $%
A$, we continue to denote representative market participants as agents $M$
(as in Section 3).

The guidance theme then starts at time $t$ with the announcement by manager $%
F$ of the current accounting numbers of the firm: $VC$ and its target $VT$
for the next official reporting date $T$. These numbers are processed by $M$
as in the disclosure theme, while now $A$ is also assumed to make its own
computations. For these computations $A$ will be asumed to use mechanism (v)
of Section 7.1 (as though in the role of agent $F$), and make its own
computions of the time-$T$ target, possibly based on a re-estimation of $V$,
$V^{\ast }$, $S$ and $S^{\ast }$; call the result $VT_{A}$, and assume that $%
A$ will announce this number to $M$ and $F$ at a time $t+\Delta _{A}$ within
$[t,T]$. This announcement will possibly induce a re-calibration of the
price processes $S$ and $S^{\ast }$; the analyst's target $VT_{A}$ will now
be added as a new variable in the decision making of the manager $F$. Apart
from a possible consequent re-calibration of $V^{\ast }$, the essential
difference from the set-up of Section 3 is that now manager $F$ is assumed
to use decision rules in four variables, say $h(T,V_{T},V_{T^{\ast }},VT_{A})
$. For present outline purposes, we will not make this 4-variable rule
explicit, leaving the details to be established elsewhere. Now running the
disclosure-mechanisms of Sections 3, 4, and 5 results, mutatis mutandis, in
either an early disclosure at some time $\tau <T$, or a regular one at time $%
T$. In both cases an announcement is made by manager $F$ of new current
numbers $VC$ and targets $VT$ for the next reporting date; these numbers
will be announced simultaneously to the analysts $A$ and to the market (as
represented by agent $M$) and the entire activity starts all over again. The
details are intended to be established elsewhere.

\section*{Appendix: Reductions}

This section collects the simplifications arising for general mark-up
decision rules in Brownian contexts. We work with a fixed probability space $%
(\Omega ,\mathcal{F},P)$ which is equipped with a filtration $\mathbf{F}=(%
\mathcal{F}_{u})_{u\geq 0}$ such that the resulting stochastic basis $%
\mathcal{X}(P)=(\Omega ,\mathcal{F},\mathbf{F},P)$ satisfies the usual
conditions (see for example {\cite[Def.~1.3, p.~2]{JacS}}). The mark-up
decision rules are assumed, as above, in the form $h_{\varepsilon
,a}(x,y)=\varepsilon (x\!-\!(1\!+\!a)y)$, for fixed parameters $\varepsilon
=\pm 1$ and $a>-1$.

\bigskip

\subsection*{A.1\quad Bad-news to good-news reductions:}

For $\Sigma $ a random variable on $\Omega ,$ measurable with respect to $%
\mathcal{F}_{t}$, and fixed $t>0$,%
\[
E^{P}[\mathbf{1}_{\{h_{+1,a}(\Sigma ,VT)\geq 0\}}\,|\,\mathcal{F}%
_{t}]=P(\{\Sigma \geq (1+a)VT\}|\,\mathcal{F}_{t}),\leqno{\rm (A.0a)}
\]%
\[
E^{P}[\mathbf{1}_{\{h_{-1,a}(\Sigma ,VT)\geq 0\}}\,|\,\mathcal{F}%
_{t}]=1-P(\{\Sigma \geq (1+a)VT\}|\,\mathcal{F}_{t}).\leqno{\rm (A.0b)}
\]%
These two yield a reduction of bad-news to good-news disclosures via
\[
E^{P}[\mathbf{1}_{\{h_{-1,a}(\Sigma ,VT)\geq 0\}}\,|\,\mathcal{F}%
_{t}]=1-E^{P}[\mathbf{1}_{\{h_{+1,a}(\Sigma ,VT)\geq 0\}}\,|\,\mathcal{F}%
_{t})]\,,\leqno{\rm (A.1)}
\]%
granted absence of point-masses in $\Sigma $ over $(1\!+\!a)VT$, given the
continuous processes in play here. Note the qualitative consequence that
factors influencing the relevant probabilities will have opposite effects on
good-news and bad-news events; an increase in a factor that leads to an
increase of (A.0a) will decrease (A.1) and vice versa.

\bigskip

\subsection*{A.2\quad Running-minimum to running-maximum reductions:}

\quad We collect here the further reductions needed for good-news events
when $S^{\ast }$ is a geometric Brownian motion. With $\Sigma ^{\ast
}=\Sigma _{t,T_{1}}^{\ast }$ the running maximum or the running minimum of $%
S^{\ast }$ on some fixed time interval $[t,T_{1}]$, let $W^{\ast }$ be an $(%
\mathbf{F},P)$-Brownian motion on $\mathcal{X}(P)$ started at $0$ at time $0$%
; for $\sigma ^{\ast }>0$ take
\[
S_{u+t}^{\ast }=S_{t}^{\ast }\exp (\mu ^{\ast }u\!+\!\sigma ^{\ast
}W_{u}^{\ast }),\quad u\in \lbrack 0,\infty ),\leqno{\rm (A.2)}
\]%
with $\mu ^{\ast }=r\!-\!\delta \!-{\frac{1}{2}}(\sigma ^{\ast })^{2}$ for $%
r\!,\delta \in \mathbb{R}$; appealing to the strong Markov property of
Brownian motion, assume also $W^{\ast }$ to be independent of $\mathcal{F}%
_{t}$, and express the events in terms of $W^{\ast }$ as follows:%
\begin{eqnarray*}
\text{(A.3)\qquad }E^{P}[\mathbf{1}_{\{h_{+1,a}(\max \{S_{w}^{\ast }|w\in
\lbrack t,T]\})\geq 0\}}\,|\,\mathcal{F}_{t}] &=&P(\max_{u\in \lbrack
0,T-t]}\{\mu ^{\ast }u+\sigma ^{\ast }W_{u}^{\ast }\}\geq A_{t}^{\ast }), \\
\text{(A.4)\qquad }E^{P}[\mathbf{1}_{\{h_{-1,a}(\min \{S_{w}^{\ast }|w\in
\lbrack t,T]\})\geq 0\}}\,|\,\mathcal{F}_{t}] &=&P(\min_{u\in \lbrack
0,T-t]}\{\mu ^{\ast }u+\sigma ^{\ast }W_{u}^{\ast }\}\leq A_{t}^{\ast }),
\end{eqnarray*}%
with%
\[
A_{T}^{\ast }:=\log ((1+a^{\ast })VT/S_{t}^{\ast }).\leqno{\rm (A5)}
\]%
Setting $W^{\ast \ast }:=-W^{\ast }$ note that%
\[
P(\min_{u\in \lbrack 0,T-t]}\{\mu ^{\ast }u+\sigma ^{\ast }W_{u}^{\ast
}\}\leq A_{t}^{\ast })=P(\max_{u\in \lbrack 0,T-t]}\{-\mu ^{\ast }u+\sigma
^{\ast }W_{u}^{\ast \ast }\}\geq -A_{t}^{\ast }).\leqno{\rm (A6)}
\]%
It is special to the Brownian context that (A.6) provides a reduction of the
running-minimum event in (A.4) to a running-maximum event in (A.3), since,
if $W^{\ast }$ is Brownian, then so is its negative $W^{\ast \ast }$
(above). An explicit determination of the expectation (A.3) can be had from
the explicit law for the running-maximum of Brownian motion; see e.g. {\cite[%
(30)~Corollary, p.~25]{Fre}}. This relies, in this Brownian context, on the
running-maximum always being positive on time intervals of positive length;%
\textbf{\ }indeed, this follows from the fact that the running-maximum of
the process is zero on non-positive time arguments. This is not directly of
use here, since the drift $\mu ^{\ast }$ is in general non-zero. But an
appropriate Girsanov transformation applied to the measure $P$ will achieve
a reduction to the zero-drift case (cf. {\cite[\S\ I.13, eqn. (13.9)]{RogW}}%
), at the cost, however, of an additional exponential factor in (A.3):%
\[
P(\max_{u\in \lbrack 0,T-t]}\{\mu ^{\ast }u+\sigma ^{\ast }W_{u}^{\ast
}\}\geq A_{t}^{\ast })=1,\leqno{\rm (A7a)}
\]%
unless $A_{t}^{\ast }>0,$ in which case%
\[
\leqno{\rm (A7b)}\qquad P(\max_{u\in \lbrack 0,T-t]}\{\mu ^{\ast }u+\sigma
^{\ast }W_{u}^{\ast }\}\geq A_{t}^{\ast })=
\]%
\[
=\frac{1}{2}\mathrm{Erfc}\left( \frac{A_{t}^{\ast }-(T-t)\mu ^{\ast }}{%
\sigma ^{\ast }\sqrt{2(T-t)}}\right) +\frac{1}{2}\exp \left( \frac{2\mu
^{\ast }A_{t}^{\ast }}{(\sigma ^{\ast })^{2}}\right) \mathrm{Erfc}\left(
\frac{A_{t}^{\ast }+(T-t)\mu ^{\ast }}{\sigma ^{\ast }\sqrt{2(T-t)}}\right) ,
\]%
with $\mathrm{Erfc}\,(z):=(2/\sqrt{\pi })\int_{[z,\infty )}\exp (-w^{2})\,dw$%
, for any complex $z$, the complementary error function. This result can be
established, mutatis mutandis, along the lines of {\cite[Lemma~A.18.2,
p.~617seq]{MusR}}); for a proof by a reduction to this result, start from
the equality
\[
P(\max_{u\in \lbrack 0,T-t]}\{\mu ^{\ast }u+\sigma ^{\ast }W_{u}^{\ast
}\}\geq A_{t}^{\ast })=1-P(\max_{u\in \lbrack 0,T-t]}\{\mu ^{\ast }u+\sigma
^{\ast }W_{u}^{\ast }\}\leq A_{t}^{\ast });
\]%
on the right-hand side we have from {\cite[eq. (A.85)]{MusR}} the equality%
\[
P(\max_{u\in \lbrack 0,T-t]}\{\mu ^{\ast }u+\sigma ^{\ast }W_{u}^{\ast
}\}\leq A_{t}^{\ast })=
\]%
\[
=N\left( \frac{A_{t}^{\ast }-(T-t)\mu ^{\ast }}{\sigma ^{\ast }\sqrt{T-t}}%
\right) -\exp \left( 2\frac{\mu ^{\ast }A_{t}^{\ast }}{(\sigma ^{\ast })^{2}}%
\right) N\left( -\frac{A_{t}^{\ast }+(T-t)\mu ^{\ast }}{\sigma ^{\ast }\sqrt{%
T-t}}\right) ,
\]%
if $A_{t}^{\ast }\geq 0$, but otherwise this probability is $0$; then
successively use the identities $1=N(\xi )+N(-\xi )$ and $N(\xi )=(1/2)%
\mathrm{Erfc}{(-\xi /\sqrt{2})}$ to arrive at (A7a,b).

Formulas for the tails of the running-minimum expressions of (A.4) are a
consequence of (A.7a,b). For this start by passing to the complementary
probability
\[
P(\min_{u\in \lbrack 0,T-t]}\{\mu ^{\ast }u+\sigma ^{\ast }W_{u}^{\ast
}\}\geq A_{t}^{\ast })=1-P(\min_{u\in \lbrack 0,T-t]}\{\mu ^{\ast }u+\sigma
^{\ast }W_{u}^{\ast }\}\leq A_{t}^{\ast });
\]%
now use (A.6) to translate the right-hand side in terms of probabilities for
the running maximum, and apply (A.7a,b) to obtain
\[
P(\min_{u\in \lbrack 0,T-t]}\{\mu ^{\ast }u+\sigma ^{\ast }W_{u}^{\ast
}\}\geq A_{t}^{\ast })=\leqno{(A.8)}
\]%
\[
=\frac{1}{2}\mathrm{Erfc}\left( +\frac{A_{t}^{\ast }-(T-t)\mu ^{\ast }}{%
\sigma ^{\ast }\sqrt{2(T-t)}}\right) -\frac{1}{2}\exp \left( \frac{2\mu
^{\ast }A_{t}^{\ast }}{(\sigma ^{\ast })^{2}}\right) \mathrm{Erfc}\left( -%
\frac{A_{t}^{\ast }+(T-t)\mu ^{\ast }}{\sigma ^{\ast }\sqrt{2(T-t)}}\right) ,
\]%
unless $A_{t}^{\ast }\geq 0$, in which case the probability is equal to $0$;
to obtain the first summand here use the identity $2=\mathrm{Erfc}(\xi )$+%
\textrm{Erfc}$(-\xi )$ with
\[
\xi :=\left( A_{t}^{\ast }-(T-t)\mu ^{\ast }\right) /\left( \sigma ^{\ast }%
\sqrt{2(T-t)}\right) .
\]

\bigskip

\bigskip

\bigskip

\bigskip

\noindent Department of Accounting, Bocconi University, Via Roberto Sarfatti
25, 20100 Milano, Italy; miles.gietzmann@unibocconi.it\newline
Mathematics Department, London School of Economics, Houghton Street, London
WC2A 2AE; A.J.Ostaszewski@lse.ac.uk\newline
Keplerstrasse 30, D-69469 Weinheim, Germany; mhgsch@gmail.com\newpage

\end{document}